\date{}
\newtheorem{theorem}{Theorem}
\newtheorem{lemma}[theorem]{Lemma}
\newtheorem{claim}[theorem]{Claim}
\newtheorem{observation}[theorem]{Observation}
\newtheorem{definition}{Definition}
\def\Box{\rule{2mm}{2mm}}
\newenvironment{proof}{\noindent {\it Proof.}}{\Box \vskip \belowdisplayskip}
\def\NP{\mbox{\bf NP}}
\newcommand{\abs}[1]{\mbox{$\left|#1\right|$}}
\newcommand{\set}[1]{\mbox{$\left\{#1\right\}$}}
\newcommand{\remove}[1]{}
\newenvironment{lp}[2]{\[\begin{array}{rcll}
                        \mbox{#1} & & #2 \\ 
                        \mbox{subject to}}{\end{array}\]}
\newcommand{\cnstr}[4]{\\ #1 & #2 & #3 & #4}
\title{Skeletons and Minimum Energy Scheduling}
\author{Antonios Antoniadis\thanks{University of Twente.}
\and Gunjan Kumar\thanks{National University of Singapore.} \and Nikhil Kumar \thanks{Hasso Plattner Institute Potsdam.}}
\begin{document}
\maketitle

\begin{abstract} 
Consider the problem where $n$ jobs, each with a release time, a deadline and a required processing time are to be feasibly scheduled in a single- or multi-processor setting so as to minimize the total energy consumption of the schedule. A processor has two available states: a \emph{sleep state} where no energy is consumed but also no processing can take place, and an \emph{active state} which consumes energy at a rate of one, and in which jobs can be processed. Transitioning from the active to the sleep does not incur any further energy cost, but transitioning from the sleep to the active state requires $q$ energy units. Jobs may be preempted and (in the multi-processor case) migrated.

The single-processor case of the problem is known to be solvable in polynomial time via an involved dynamic program, whereas the only known approximation algorithm for the multi-processor case attains an approximation factor of $3$ and is based on rounding the solution to a linear programming relaxation of the problem. In this work, we present efficient and combinatorial approximation algorithms for both the single- and the multi-processor setting. Before, only an algorithm based on linear programming was known for the multi-processor case. Our algorithms build upon the concept of a \emph{skeleton}, a basic (and not necessarily feasible) schedule that captures the fact that some processor(s) must be active at some time point during an interval. Finally, we further demonstrate the power of skeletons by providing an  $2$-approximation algorithm for the multiprocessor case, thus improving upon the recent breakthrough $3$-approximation result. Our algorithm is based on a novel rounding scheme of a linear-programming relaxation of the problem which incorporates skeletons.

\end{abstract}


\section{Introduction}

Energy consumption is one of the most important aspects of computing environments as supported, for example, by the fact that data centers already account for more than $1$\% of global electricity demand, and are forecast to reach $8$\% by $2030$~\cite{nature}. With that in mind, modern hardware increasingly incorporates power-management capabilities. However, in order to take full advantage of these capabilities algorithms in general, and scheduling algorithms in particular, must take energy consumption into consideration -- on top of the classical algorithm complexity measures of time and space.

In this work we study \emph{power-down mechanisms} which are one of the most popular power-management techniques available on modern hardware. In the most basic setting, the processor (or device) can reside either in an \emph{active} state in which processing can take place, or in a \emph{sleep} state of negligible energy-consumption in which no processing can take place. Since transitioning the processor from the sleep to the active state requires energy, the scheduler would like to satisfy all the processing requirements, while making use of the sleep state as efficiently as possible. To give some intuition, it is preferable to reside in the sleep state for fewer but longer time-intervals than frequently switching between the two states.

A bit more formally, consider a set of $n$ jobs, each with a release time, a deadline and a processing-time requirement, to be feasibly scheduled on either a single or a multi-processor system that is equipped with a powerdown mechanism. When the (each) processor resides in the active state it consumes energy at a rate of one, whereas it does not consume any energy when in the sleep state. Transitioning from the sleep state to the active state requires a constant amount of energy (the \emph{wake-up cost}), whereas transitioning from the active state to the sleep state is free of charge (this is w.\,l.\,o.\,g., since positive energy cost could be folded onto the wake-up cost). Jobs can be preempted and (in the multi-processor setting) migrated, but every job can be processed on at most one machine at any given time. The objective is to produce a feasible schedule (assuming that such a schedule exists) which consumes the minimum amount of energy.  In Graham's notation, and with $E$ being the appropriate energy function, the problems we study can be denoted as $1|r_j;\overline{d}_j;\text{pmtn}|E$ and $m|r_j;\overline{d}_j;\text{pmtn}|E$ respectively.

The problem was first stated in its \emph{single-processor} version by Irani et al.\,~\cite{iraniSGtalg07} along with an $O(n\log n)$-time $2$-approximation algorithm for it. The algorithm, called \emph{Left-to-Right} greedily keeps the processor at its current state for as long as possible. The problem plays a central role in the area of energy efficient algorithms~\cite{IraniP05}, and the exact complexity of it was only resolved by Baptiste et al.\,~\cite{baptiste2007polynomial} (and a bit earlier for the special case of unit-processing-time jobs~\cite{baptiste2006scheduling}), who gave an exact polynomial-time algorithm for the problem. Their algorithm is based on a dynamic programming approach, which at least for the arbitrary processing-time case is rather involved, and obtains a running time of $O(n^5)$. 

When considering the \emph{multi-processor} setting, it is unclear how to adapt the aforementioned dynamic programming approach to the \emph{multi-processor} setting, while enforcing that a job does not run in parallel to itself. Additionally, several structural properties that had proven useful in the analysis of the single-processor setting do not carry over to the multiprocessor setting. Only very recently, an approximation algorithm for the problem that attains a non-trivial approximation guarantee was presented by Antoniadis et al.~\cite{AGKK20}. Their algorithm is based on carefully rounding a relaxation of a linear programming formulation for the problem and has a approximation ratio of $3$. They also show that their approach gives an LP-based  $2$-approximation algorithm for the single-processor case. We note that whether the problem is NP--hard or not in the multiprocessor-setting remains a major open question.

\subsection{Formal Problem Statement and Preliminaries}

Consider a set of jobs $\{j_1,j_2,\ldots, j_n\}$; job $j_i$ has release time $r_i$, deadline $d_i$ and a processing requirement of $p_i$, where  all these quantities are non-negative integers.
Let $r_{min}$ and $d_{max}$ be the earliest release time and furthest deadline of any job; it is no loss of generality to assume $r_{min} = 0$ and $d_{max}=D$. For $t\in \mathbb{Z}_{\geq 0}$, let $[t,t+1]$ denote the $t^{\rm th}$ {\em time-slot}. Let $I=[t,t'], t, t'\in \mathbb{Z}_{\geq 0}, t<t'$ be an {\em interval}. The length of $I$, denoted by \abs{I} is $t'-t$. Finally, we use $t\in I=[a,b]$ to denote $a\le t\le b$ and call interval $[r_i,d_i]$ the \emph{span} of job $i$.

Two intervals $I_1=[a_1,b_1]$ and $I_2=[a_2,b_2]$ {\em overlap} if there is a $t$ such that $t\in I_1$ and $t\in I_2$. Thus two intervals which are right next to each other would also be considered overlapping. Intervals which do not overlap are considered {\em disjoint}. $I_1$ is {\em contained} in $I_2$, denoted $I_1\subseteq I_2$, if $a_2\le a_1 < b_1\le b_2$ and it is {\em strictly contained} in $I_2$, denoted $I_1\subset I_2$, if $a_2 < a_1 < b_1 < b_2$. 


At any time-slot, a processor can be in one of two states: the \emph{active}, or the \emph{sleep} state. For each time-slot that a processor is in the active state, it requires one unit of power whereas no power is consumed in the sleep state. However, $q$ units of energy (called \emph{wake up} energy) are expended when the processor transitions from the sleep to the active state. In its active state, the processor can either process a job (in which case we refer to it as being \emph{busy}) or just be \emph{idle}. On the other hand the processor cannot perform any processing while in the sleep state. Note that whenever a period of inactivity is at least $q$ time-slots long then it is worthwhile to transition to the sleep state, whereas if it is less than $q$ time-slots long then it is preferable to remain in the active state.


A processor can process at most one job in any time-slot and a job cannot be processed on more than one processor in a time-slot. However, job preemption and migration are allowed, i.e., processing of a job can be stopped at any time and resumed later on the same or on a different processor. A job $j_i$ must be processed for a total of $p_i$ time-slots within the interval $[r_i,d_i]$. Any assignment of jobs to processors and time slots satisfying the above conditions is called a (feasible) \emph{schedule}. We assume that the processor is initially in the sleep state. Therefore, the energy consumed by a schedule is the total length of the intervals during which the processor is active plus $q$ times the number of intervals in which the processor is active. The objective of the problem is to find a schedule which consumes minimum energy.

We will use $P$ to denote the sum of all the processing times, ie. $P=\sum_{i=1}^{n}p_i$. We will use $OPT$ to denote a fixed optimal solution and $Q$ to denote the total wake up cost incurred by $OPT$. Given a solution $S$ (not necessarily feasible), by \emph{maximal} \emph{gap} of $S$, we will refer to intervals $[a,b]$ such that there are no active time slots in $[a,b]$ and processor is active at time slots $a-1$ and $b$. An interval $[a,b]$ of $S$ is \emph{active} if the processor is active in all time slots in $[a,b]$ and inactive in time slots $a-1$ and $b$.

Given a single processor instance, a schedule is called a \emph{skeleton} if for all jobs $j_i$, there is at least one active interval overlapping with the span of $j_i$. In other words, there must be at least one active time slot in $[r_i-1,d_i+1]$. Note that a skeleton need not be a feasible solution. A \emph{minimum cost skeleton} is a skeleton of minimal energy consumption over all skeletons.

In the following we let $(x)^{+}$ to stand for $x$ if $x \geq 0$ and $0$ otherwise. Due to space constraints omitted proofs are deferred to the appendix.


\subsection{Our Contribution}

We study the problem in both the single-processor and the multi-processor setting. Our core technical contribution is that of introducing the concept of minimum cost skeletons. We then employ this concept to design combinatorial and efficient approximation algorithms for single and multi-processor setting. Finally we demonstrate how skeletons can also be useful in strengthening the LP-formulation of the problem with additional constraints by giving the first known $(2+\epsilon)$-approximation algorithm for the multi-processor case. More formally, our contribution is based on the following.

\paragraph*{Single Processor.}
We begin by introducing the notion of skeletons for single-processor instances, and presenting a simple dynamic programming algorithm for computing minimum cost skeletons in Section~\ref{sec:single-skeleton}. Roughly speaking, a skeleton is a (not necessarily feasible) solution which overlaps with the span of every job. Apart from providing a lower bound for the cost of the optimal solution $OPT$, a skeleton has useful structural properties that allow us to convert it into a feasible solution without much overhead. The first result in which we demonstrate this, is the following.

\begin{theorem} \label{theorem:OPT_plus_P}
There exists an $O(n\log n)$-time algorithm that computes a solution of cost at most $OPT+P$. 
\end{theorem}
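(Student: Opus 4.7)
The approach I would take relies on the simple observation that $OPT$ is itself a skeleton: for every job $j_i$ to receive its $p_i\ge 1$ units of processing within $[r_i,d_i]$, $OPT$ must have at least one active time-slot in $[r_i-1,d_i+1]$. Consequently, the minimum cost skeleton $S^*$ (computed in $O(n\log n)$ time by the dynamic program of Section~\ref{sec:single-skeleton}) satisfies $\mathrm{cost}(S^*)\le OPT$. This is the lower bound that will be matched up to an additive $P$.

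Given $S^*$, I would convert it into a feasible schedule $F$ by extending its active intervals to accommodate the actual processing. Concretely, I process the jobs in earliest-deadline-first order; for each job $j_i$, if the currently active slots of $F$ inside $[r_i,d_i]$ do not yet provide $p_i$ units of processing time for $j_i$, I extend the active interval of $S^*$ that overlaps $[r_i-1,d_i+1]$ (guaranteed to exist by the skeleton property) by adding active slots toward the interior of $[r_i,d_i]$, until $j_i$ can be scheduled. Because every added slot is contiguous with an existing active interval of $S^*$, no new wake-up is incurred; whenever an extension causes two of $S^*$'s intervals to merge, the wake-up count actually decreases. The total number of extra active slots is at most $\sum_i p_i = P$, since we only add slots that are immediately consumed by processing some job. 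Hence
\[
\mathrm{cost}(F) \;\le\; \mathrm{cost}(S^*) + P \;\le\; OPT + P.
\]

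The main obstacle is justifying the feasibility of the extension step: one must argue that (a) the added slots always lie inside the span of the corresponding job, and (b) different jobs never contend for the same slot in an unresolvable way. For (a), since $S^*$ guarantees an active slot $t\in[r_i-1,d_i+1]$, extending the interval through $t$ toward the interior of $[r_i,d_i]$ produces slots strictly inside the span, and feasibility of the instance ensures $p_i\le d_i-r_i$. For (b), EDF processing together with a standard exchange argument for preemptive scheduling suffices: whenever EDF would miss a deadline we extend the nearest skeletal interval, and a charging scheme assigns each extended slot to a unique unit of processing of some job, bounding the total extension by $P$. Maintaining the set of available active slots in a balanced BST or priority queue keyed by deadline yields the promised $O(n\log n)$ running time, matching the bound already required to compute $S^*$.
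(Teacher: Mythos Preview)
Your high-level plan---compute a minimum-cost skeleton, then extend its intervals at a cost of at most $P$ without introducing new wake-ups---is exactly the paper's strategy. The gap is in the extension step. You insist that the added slots lie ``strictly inside the span'' $[r_i,d_i]$, justified by $p_i\le d_i-r_i$. This is not enough, because slots in $[r_i,d_i]$ may already be committed to earlier-deadline jobs that could have been scheduled elsewhere. Consider $j_1$ with $(r_1,d_1,p_1)=(2,4,2)$ and $j_2$ with $(r_2,d_2,p_2)=(0,3,2)$, and skeleton $\{[2,3]\}$. Processing in EDF order, you extend to $[1,3]$ and place $j_2$ there; when you then turn to $j_1$, extending only inside $[2,4]$ yields at most one free slot ($[3,4]$), and $j_1$ cannot be completed. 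The instance is feasible, but the fix requires activating $[0,1]$---outside $[r_1,d_1]$---and pushing $j_2$ leftward. Your ``standard exchange argument'' handwave does not address this; the issue is not reordering jobs within the current active slots but rather \emph{where} to add new slots.

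The paper's Lemma~\ref{lemma:extending_skeleton} handles this by first running EDF on the skeleton to obtain the schedulable volume $p_i'$ for each job, and then, for each $i$, activating the first $p_i-p_i'$ inactive slots to the left of $d_i$---explicitly allowing these slots to fall to the left of $r_i$. The correctness proof is a non-trivial induction showing that after iteration $i$ the volumes $\{p_1,\dots,p_i,p_{i+1}',\dots,p_n'\}$ are schedulable; the inductive step splits on whether $[x,d_i]$ contains a gap and uses a careful EDF argument in the fully-active case. Your sketch is missing both the ``extend past $r_i$'' mechanism and this feasibility argument.
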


The algorithm  produces a solution of cost at most $OPT+P$, where $OPT$ is the cost of optimal solution and $P$ is the sum of processing times. Since $P\le OPT$ this implies another $2$-approximation algorithm, thus matching algorithm Left-to-Right of~\cite{iraniSGtalg07} in both running time and approximation guarantee.

We further build upon the ideas of Section~\ref{sec:single-skeleton} in Section~\ref{sec:single-improved} to give a $O(n\log n)$-time algorithm that also builds upon a minimum cost skeleton in order to compute a solution of cost at most $OPT$ $+P/(\alpha-1)+Q(2\alpha+1)$. Here $Q$ is the total wake up cost incurred by the optimal solution and $\alpha$ is any real number greater than 1. We show how this result can be used in order to obtain a $O(n\log^{2}n)$ algorithm that computes a near optimal solution when $P>>Q$ (which in most scenarios is the practically relevant case) or $Q >>P$:
\begin{theorem}
\label{thm:near-optimal}
Let $t=\max \{P/Q,Q/P\}$. Then there exists a $O(n \log^{2}n)$ time algorithm which computes a solution of cost at most $OPT(1+8t^{-1/2})$.
\end{theorem}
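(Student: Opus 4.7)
My plan is to run, in parallel, the algorithm of Theorem~\ref{theorem:OPT_plus_P} (which produces a schedule of cost at most $OPT + P$) together with the $\alpha$-parametrized algorithm of Section~\ref{sec:single-improved} (which produces a schedule of cost at most $OPT + P/(\alpha-1) + Q(2\alpha+1)$) for each $\alpha$ in a geometric grid $\mathcal{G}$ of $O(\log n)$ values, then output the cheapest schedule produced. Each invocation runs in $O(n\log n)$ time and there are $O(\log n)$ of them, so the total running time is $O(n\log^2 n)$ as required.

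The analysis splits into two cases. When $Q \geq P$, so $t^{-1/2} = \sqrt{P/Q}$, I claim the schedule from Theorem~\ref{theorem:OPT_plus_P} alone suffices: combining the standard lower bound $OPT \geq P+Q \geq 2\sqrt{PQ}$ (AM--GM) with a direct calculation shows $P \leq 8\,OPT\sqrt{P/Q}$, hence $OPT + P \leq OPT(1 + 8\,t^{-1/2})$. When $P > Q$, so $t^{-1/2} = \sqrt{Q/P}$, I would turn to the $\alpha$-algorithm: minimizing the overhead $f(\alpha) := P/(\alpha-1) + Q(2\alpha+1)$ over $\alpha > 1$ yields the ideal $\alpha^* = 1 + \sqrt{P/(2Q)}$ with $f(\alpha^*) = 2\sqrt{2PQ} + 3Q$. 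For a geometric grid of ratio $2$ there is always some $\alpha \in \mathcal{G}$ with $(\alpha-1)/(\alpha^*-1) \in [1/\sqrt{2},\sqrt{2}]$; substituting such $\alpha$ inflates each of the two terms of $f$ by at most a factor $\sqrt{2}$, giving $f(\alpha) \leq 4\sqrt{PQ} + 3Q \leq 7\sqrt{PQ}$ (using $Q < P$). Combined with $OPT \geq P$, this yields cost at most $OPT(1 + 8\sqrt{Q/P})$, as desired.

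The main obstacle I anticipate is specifying $\mathcal{G}$ correctly: it must contain only $O(\log n)$ points to meet the running-time budget, yet some point must lie within a factor $\sqrt{2}$ of the unknown $\alpha^*$. Because $Q$ is a positive integer multiple of the wake-up cost $q$ satisfying $q \leq Q \leq nq$, the value $\alpha^* - 1 = \sqrt{P/(2Q)}$ always lies in an interval whose endpoints have ratio $\sqrt{n}$. Taking $\mathcal{G} = \{1 + 2^i : |i| \leq \lceil \tfrac{1}{2}\log(Pn/q)\rceil\}$ therefore covers the required range with $O(\log n)$ points, the constants line up comfortably below the factor~$8$, and the theorem follows.
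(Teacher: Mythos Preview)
Your approach is essentially the paper's: combine the $OPT+P$ schedule with a geometric sweep over $\alpha$ in Theorem~\ref{theorem:skeleton_scaled}, and in each regime bound the additive overhead by $O(\sqrt{PQ})$, which becomes $O(OPT\cdot t^{-1/2})$ via $OPT\ge P+Q$. The case split, the optimisation of $f(\alpha)$, and the use of $q\le Q\le nq$ to restrict the search for $\alpha^*$ all mirror the paper (you even obtain $7\sqrt{PQ}$ where the paper settles for $8\sqrt{PQ}$).

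There is, however, a real gap in your grid specification. You argue correctly that because $q\le Q\le nq$, the target $\alpha^*-1=\sqrt{P/(2Q)}$ lies in an interval $\bigl[\sqrt{P/(2nq)},\sqrt{P/(2q)}\bigr]$ whose endpoints have ratio $\sqrt{n}$, so $O(\log n)$ geometric grid points suffice. But the grid you then write down, $\mathcal{G}=\{1+2^i:|i|\le\lceil\tfrac12\log(Pn/q)\rceil\}$, is anchored at $2^0=1$ rather than at $\sqrt{P/(2q)}$, and therefore contains $\Theta(\log(Pn/q))$ points. Nothing in the problem bounds $P/q$ polynomially in $n$ (processing times are arbitrary integers), so this grid can have $\omega(\log n)$ points and the claimed $O(n\log^2 n)$ running time fails. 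The repair is immediate from your own reasoning: anchor the grid at the known value $\sqrt{P/(2q)}$ and step down by factors of $2$ for $O(\log n)$ steps, e.g.\ take $\alpha-1\in\{\sqrt{P/(2q)}\cdot 2^{-j}:0\le j\le\lceil\tfrac12\log n\rceil\}$. This is precisely what the paper does in slightly different coordinates, setting $\alpha=\sqrt{P/(iq)}$ for $i\in\{1,2,4,\dots,2^{\lceil\log n\rceil}\}$.
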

Note, that this implies a $(1+\epsilon)$-approximation algorithm when $t \ge 8/\epsilon^2$, whereas Left-to-Right by Irani et al.~\cite{iraniSGtalg07} remains a $2$-approximation algorithm even in that case.

Finally, we give an algorithm that is also an $35/18\approx 1.944$-approximation algorithm in $O(n\log n)$-time, improving upon the approximation-ratio of Left-to-Right algorithm upon the greedy 2-approximation algorithm of Irani et al.~\cite{iraniSGtalg07}:
\begin{theorem}
\label{thm:improving2}
There is a $\frac{35}{18}$ approximation algorithm for the single processor case in $O(n \log n)$ time. 
\end{theorem}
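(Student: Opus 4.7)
My plan is to combine two algorithms already advertised in the introduction: the $O(n \log n)$-time algorithm of Theorem~\ref{theorem:OPT_plus_P}, which produces a schedule of cost at most $OPT + P$, and the parametrized $O(n \log n)$-time algorithm of Section~\ref{sec:single-improved}, which for any fixed $\alpha > 1$ produces a schedule of cost at most $OPT + P/(\alpha-1) + Q(2\alpha+1)$. I instantiate the latter with $\alpha = 3$, so its cost is at most $OPT + P/2 + 7Q$. The final algorithm runs both subroutines and returns the cheaper of the two schedules; the running time is clearly $O(n \log n)$.

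Write $p = P/OPT$ and $q = Q/OPT$. Any feasible schedule (including $OPT$) must spend $P$ units on processing and at least $Q$ units on wake-ups, giving the key inequality $p + q \le 1$. The two subroutines achieve approximation ratios $\rho_1 = 1+p$ and $\rho_2 = 1 + p/2 + 7q$, respectively, and the returned solution has ratio $\min(\rho_1, \rho_2)$. A one-line case analysis on $p$ bounds this: if $p < 14/15$ then $\rho_1 < 29/15$; if $p \ge 14/15$ then $q \le 1-p \le 1/15$ and $\rho_2 \le 1 + p/2 + 7(1-p) = 8 - 13p/2 \le 29/15$. Either way the ratio is at most $29/15 < 35/18$, establishing the theorem.

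The main obstacle in this argument is not the combination step itself, which reduces to a one-variable trade-off on the line $p+q=1$, but rather ensuring that the lower bound $P+Q \le OPT$ is invoked correctly: the weaker bound $\max(P,Q) \le OPT$ is insufficient, since $\rho_2$ can be made arbitrarily large in $q$ alone. A secondary point is the choice of $\alpha$: numerical inspection shows that $\alpha = 3$ is essentially optimal among simple best-of-two combinations, yielding ratio $29/15$, and the slack between $29/15$ and $35/18$ means that any constant $\alpha$ in a neighbourhood of $3$ suffices to meet the bound claimed in the theorem.
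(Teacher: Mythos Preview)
Your approach is exactly the paper's: run the algorithm of Theorem~\ref{theorem:OPT_plus_P} and the algorithm of Theorem~\ref{theorem:skeleton_scaled} with $\alpha=3$, and return the cheaper schedule. One caveat: you quote the second bound as $OPT+P/(\alpha-1)+Q(2\alpha+1)$, which is how the introduction states it, but Theorem~\ref{theorem:skeleton_scaled} actually proves $OPT+2(\alpha+1)Q+P/(\alpha-1)$, i.e.\ $8Q$ rather than $7Q$ when $\alpha=3$ (the introduction contains a typo). With the correct constant your argument still goes through---the crossover moves to $p=16/17$ and the ratio becomes $33/17<35/18$. In fact your analysis is tighter than the paper's: the paper bounds $P\le OPT$ and $Q\le OPT/18$ separately in the case $P>17Q$, landing exactly on $35/18$, whereas your joint use of $p+q\le 1$ squeezes out $33/17$.
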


Although (as already mentioned) Baptiste et al.~\cite{baptiste2007polynomial} present an exact algorithm for the problem, it is based on a rather involved dynamic program. In contrast, our algorithms have as their main advantages that they are combinatorial in nature, simple to implement, it has an improved approximation-ratio compared to all other non-exact algorithms for the problem and even obtains a near-optimal solution for the interesting and practically relevant case of $P>>Q$ in near linear time.

\paragraph*{Multiple Processors.} 
Although the notion of a skeleton does not naturally extend to the multi-processor setting, it is possible to define skeletons for that setting so as to capture the same intuition. In Section~\ref{sec:multi-skeletons} we do exactly that before presenting a polynomial-time algorithm for computing minimum-cost multi-processor skeletons. In Section~\ref{sec:multi-converting}, we give an algorithm to convert any skeleton into a feasible solution while increasing the cost by a factor of at most $6$:
\begin{theorem}
    \label{thm:6apx}
    There exists a combinatorial $6$-approximation algorithm for the multi-processor case of the problem.
\end{theorem}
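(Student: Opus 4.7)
The plan is to combine the two ingredients developed in the preceding sections: first, compute a minimum-cost multi-processor skeleton $SK$ using the polynomial-time algorithm of Section~\ref{sec:multi-skeletons}; then apply the conversion procedure of Section~\ref{sec:multi-converting} to $SK$ to obtain a feasible schedule whose cost is at most $6 \cdot \text{cost}(SK)$. Together with the fact that $\text{cost}(SK) \leq OPT$, this immediately yields a feasible schedule of cost at most $6\,OPT$.

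The first observation I would establish is that a minimum-cost skeleton is a valid lower bound on $OPT$. Since $OPT$ is a feasible schedule and every job $j_i$ must be processed for $p_i$ time-slots somewhere inside its span $[r_i,d_i]$, the processor on which $j_i$ is processed must be active in some time-slot of that span. Consequently the active intervals of $OPT$ form a skeleton, and therefore the minimum-cost skeleton has cost at most $OPT$.

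The main technical difficulty lies in the conversion step. The natural approach is to treat the active intervals of $SK$ as \emph{anchors}: each job's span is guaranteed to overlap some anchor, so the job can be processed in (a suitable extension of) one such anchor. What makes the multiprocessor case substantially harder than the single-processor case is that a single job cannot run on two processors in parallel, so one must commit each job to one specific anchor on one specific processor and then argue that the aggregate workload routed there is schedulable within the spans of its jobs. I would attempt this by a greedy routing of jobs to anchors, scheduling them in earliest-deadline-first order within each anchor, and extending each anchor just far enough (to the left or right) to accommodate its assigned workload. A Hall-type or flow-based feasibility check would be used to verify that the routing can always be carried out.

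The cost bound then decomposes into three contributions: the original skeleton cost, the extra active time spent processing jobs inside the extensions, and the extra wake-up cost arising whenever an extension bridges or spills out of a maximal gap of $SK$. Each contribution can be charged to either $\text{cost}(SK) \leq OPT$ or to $P \leq OPT$, and a careful accounting should yield the final factor of $6$. The most delicate step, which I expect to be the main obstacle, will be controlling the wake-up overhead: one has to argue that extending an anchor never forces enough additional wake-ups to exceed a constant multiple of the skeleton's wake-up count, and here the defining property of a skeleton (that every span is already anchored, so extensions are always \emph{local}) is precisely what prevents the fragmentation from blowing up.
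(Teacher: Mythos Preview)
Your proposal has a genuine gap in the conversion step, and also misreads the multi-processor skeleton definition.

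First, the lower-bound argument you give (``every job's span overlaps some active interval'') is the \emph{single-processor} skeleton property. The multi-processor skeleton defined in Section~\ref{sec:multi-skeletons} requires that for every interval $[t_i,t_j]$ there is a time slot with at least $m-l(t_i,t_j)$ active processors; the fact that $OPT$ is a skeleton follows directly from the definition of $l(\cdot,\cdot)$, not from the per-job argument you sketch. This matters because your subsequent ``anchor'' idea is built on the single-processor picture and does not exploit the actual structural guarantee the multi-processor skeleton provides.

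Second, and more importantly, the paper's conversion is not a single greedy routing-plus-extension; it has two distinct phases whose interaction is what produces the factor $6=2\times 3$. In the \emph{extension phase} one uses the deficiency machinery of~\cite{AGKK20} (Theorem~\ref{thm:feasibility} and algorithm EXT-ALG) to extend supply intervals one slot at a time, each extension reducing the maximum deficiency by one; this adds at most $P$ to the cost. The crucial point you miss is that this phase can \emph{get stuck}: it may terminate with an infeasible $\mathcal{I}$ in which every interval of the maximum-deficiency set $\mathcal{D}$ is fully contained in some supply interval, so no extension helps. The paper then applies a \emph{tripling phase}, setting $m'_t=\min\{3m_t,m\}$ everywhere, and proves (Lemma~\ref{lemma: multi_proc_feasibility}, via a delicate counting argument on jobs overlapping each $D_i$) that the result is feasible. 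Your proposal has no analogue of this second phase, and the Hall-type / EDF routing you suggest will run into exactly the obstruction the tripling phase is designed to resolve: when the deficient region is already entirely covered by the skeleton, extending anchors left or right does nothing, and one is forced to increase the number of active processors \emph{within} the region. Without that mechanism the argument cannot close.
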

This implies the first combinatorial constant-factor approximation algorithm for the multiprocessor case. The arguments required in its analysis are however much more delicate and involved than the single processor case and heavily build upon the tools developed in~\cite{AGKK20}. 

Finally, in Section~\ref{sect:last_one} we further demonstrate the power of skeletons by using them to develop a $(2+\epsilon)$-approximation algorithm for the multi-processor case. Thus we improve upon the recent breakthrough  $3$-approximation from~\cite{AGKK20}. We note, that obtaining any non-trivial approximation guarantee in the multiprocessor setting has been a long standing open problem (see~\cite{baptiste2007polynomial}).
\begin{theorem}
    \label{thm:lp_2}
    There exists a $(2+\epsilon)$-approximation algorithm for the problem on parallel machines.
\end{theorem}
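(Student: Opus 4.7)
The plan is to strengthen the linear programming relaxation from~\cite{AGKK20} with skeleton-derived constraints and then design a rounding scheme whose loss is governed by at most two (rather than three) charges against the optimum. Concretely, I would first compute a minimum-cost multi-processor skeleton $\mcS^{*}$ via the polynomial-time algorithm developed in Section~\ref{sec:multi-skeletons}; since $\mcS^{*}$ must be overlapped by the span of every job in \emph{any} feasible solution, its energy is a valid lower bound on $OPT$. I would then augment the LP from~\cite{AGKK20} with a family of valid inequalities asserting that, for every job $j_i$, at least one processor must be active somewhere in $[r_i-1,d_i+1]$ and, more globally, that the total active mass fractionally used must pay at least the skeleton cost $\mathrm{cost}(\mcS^{*})$. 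This yields an LP whose optimum is still at most $OPT$ but whose fractional structure now carries a ``backbone'' of guaranteed active intervals.

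Next I would round the augmented LP in two phases. In the first phase I would install the skeleton $\mcS^{*}$ (or a slightly enlarged version of it, truncating intervals shorter than $q$) as the integral active-interval structure; by the skeleton constraint this already accounts for the wake-up cost within a $(1+\epsilon)$ factor, with the $\epsilon$ absorbed by enumerating/guessing a constant-size classification of long versus short intervals (using a standard bucketing of interval lengths into powers of $(1+\epsilon)$). In the second phase I would schedule the actual jobs on top of this backbone following the rounding template of~\cite{AGKK20}: route fractional assignments of jobs to time-slots and machines into an integral assignment by the constructive arguments (flow/matching based) used there, but crucially only paying for \emph{additional} active time outside the skeleton.

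The accounting would then charge the final cost against $OPT$ twice: once for the skeleton backbone (bounded by $\mathrm{cost}(\mcS^{*})\le OPT$) and once for the extra active mass introduced during job assignment (bounded by the LP optimum, which is at most $OPT$). Because the wake-up transitions needed by the rounding are already paid for by the skeleton, we avoid the third charge that led to the factor $3$ in~\cite{AGKK20}, yielding $2\,OPT$ plus an $\epsilon\,OPT$ slack coming from the length-bucketing and any truncation of short idle gaps in $\mcS^{*}$.

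The main obstacle I anticipate is the second phase: showing that the job-assignment rounding can always be executed \emph{on top of} the fixed skeleton without creating new active intervals whose wake-up cost is unaccounted for. This requires arguing that whenever the rounding from~\cite{AGKK20} would schedule processing in a slot not covered by $\mcS^{*}$, that additional processing can be aggregated into the skeleton's existing active intervals (possibly by extending them by at most a small amount that is charged to the $\epsilon$ slack) or moved to skeleton slots by the migration/preemption freedom. Establishing this structural compatibility — essentially, that the LP's fractional solution, once constrained to contain the skeleton, can be rounded to an integral schedule whose active support is contained in a mild dilation of $\mcS^{*}$ — will be the delicate combinatorial core of the proof.
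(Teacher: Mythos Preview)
Your high-level intuition---use skeletons to eliminate one of the three charges in~\cite{AGKK20}---is correct, but the concrete plan has two real gaps, and it diverges significantly from how the paper actually achieves the bound.

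First, the LP constraints you propose are too weak. Requiring ``at least one processor active somewhere in $[r_i-1,d_i+1]$'' and a single aggregate lower bound of $\mathrm{cost}(\mcS^{*})$ on the objective does not give the fractional solution any structural density you can exploit in rounding. The paper instead adds, for \emph{every} interval $[a,b]$, the constraint $\sum_{I:I\cap[a,b]\neq\varnothing} x_I \ge m-l(a,b)$, where $l(a,b)$ is the maximum number of machines that can be blacked out on $[a,b]$. These are the multi-processor skeleton constraints themselves, not just a scalar consequence of them, and the key technical lemma (that any ``non-block'' with $m_t>1$ somewhere has total overlapping LP weight at least $2$) needs exactly this per-interval form together with a secondary LP that selects, among optimal solutions, one maximizing $\sum_t\min(m_t,1)$.

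Second, your two-phase scheme---install $\mcS^{*}$ integrally, then round job assignments on top---does not control the wake-up cost of phase two. You say the wake-up transitions are ``already paid for by the skeleton,'' but the skeleton only pays for its own intervals; nothing prevents the job-assignment rounding from needing to open new active intervals far from $\mcS^{*}$. You correctly flag this as the obstacle, but the proposed fix (aggregate extra processing into skeleton intervals, or absorb it in an $\epsilon$-dilation) is precisely the claim that would need a proof, and in general it is false: the LP may place substantial fractional capacity in regions where the skeleton has none, and migration freedom alone does not let you relocate that capacity without violating deadlines. The paper avoids this entirely: it never installs $\mcS^{*}$ as a separate object. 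Instead it partitions time into blocks ($m_t<1$) and non-blocks ($m_t\ge 1$), uses the weight-$2$ lemma to add one full integral interval per non-trivial non-block while charging its wake-up to $Q_F$, floors the non-block capacities, replaces the ``bottom layer'' by an integral single-machine skeleton via the exact LP of the appendix, and finally runs the extension algorithm. The accounting is $\mathrm{Cost}(F)+P_N+Q_F+P_B=\mathrm{Cost}(F)+P+Q_F\le 2\,\mathrm{Cost}(F)$, with the $\epsilon$ arising only from converting pseudo-polynomial to polynomial time via the standard discretization already in~\cite{AGKK20}.
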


More specifically, we are able to strengthen the linear program used in~\cite{AGKK20} with additional constraints that are guided by the definition of skeletons. We note that it is unclear whether the rounding scheme used in~\cite{AGKK20} can take advantage of these new constraints. To that end we devise a novel rounding technique, and also show that the considered linear program has an integrality gap of at most $2$. 
In other words, we show the following result.
\begin{theorem}
    \label{thm:pseudo}
There exists a pseudo-polynomial time $2$-approximation algorithm  for deadline scheduling on parallel processors.
\end{theorem}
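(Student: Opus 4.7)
The plan is to strengthen the LP relaxation of~\cite{AGKK20} with skeleton-based covering inequalities, and to design a rounding scheme that produces a feasible schedule of cost at most $2\cdot LP^{*}\leq 2\cdot OPT$. Since the LP uses variables indexed by time slots, its size is polynomial in $n$, $m$, and the time horizon $D$; the overall procedure therefore runs in pseudo-polynomial time and delivers the $2$-approximation claimed in Theorem~\ref{thm:pseudo}.

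First I would set up the LP. The decision variables $x_{i,j,t}$ specify the fractional assignment of job $j_i$ to machine $j$ in time-slot $t$, the variables $y_{j,t}$ specify the fractional activity of machine $j$ at time $t$, and $w_{j,t}$ measure fractional wake-ups; the objective is $\sum_{j,t} y_{j,t} + q\sum_{j,t} w_{j,t}$. Standard constraints enforce processing demand, release/deadline windows, per-slot machine and per-slot job capacities, and $w_{j,t}\geq y_{j,t}-y_{j,t-1}$. To this I add skeleton-covering inequalities: introducing auxiliary interval variables $z_{j,I}$ with $y_{j,t}=\sum_{I\ni t}z_{j,I}$ and $w_{j,t}=\sum_{I\text{ starts at }t} z_{j,I}$, I impose, for every job $j_i$, the constraint $\sum_{j}\sum_{I\cap [r_i,d_i]\neq\emptyset} z_{j,I}\geq 1$. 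These constraints are valid because in any integral feasible schedule at least one active interval on some machine overlaps the span of each job, and they are polynomial in number (at most $O(mD^{2})$ candidate intervals per job).

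Next I would round. Given an optimal LP solution $(x^{*},y^{*},z^{*})$, the rounding proceeds in two stages. In the first stage, an integral multi-processor skeleton of cost at most $LP^{*}$ is extracted by exploiting the covering structure of the added constraints, selecting, for each job, one $z_{j,I}$ via a hierarchical argument that respects the fractional mass on overlapping intervals. In the second stage, the skeleton is populated with job processing: the LP's $x^{*}$-values supply a fractional assignment that, together with the chosen integer skeleton, can be converted into an integer assignment via an integral flow argument on a bipartite job--slot graph in which active slots have unit capacity. Any unassigned processing is accommodated by extending active intervals, incurring extra cost once again bounded by $LP^{*}$; adding the two stages yields total cost at most $2\cdot LP^{*}$.

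The main obstacle is the rounding. Independently rounding the $y^{*}_{j,t}$ values does not preserve feasibility, because it can leave jobs with insufficient active capacity in their spans while separately paying the wake-up cost. The skeleton covering inequalities are precisely what allow us to decouple the cost of creating active intervals from the cost of processing the jobs, yielding the clean factor-of-two bound. Establishing that, once the skeleton is rounded, the remaining fractional assignment can be completed integrally at cost at most another $LP^{*}$ --- crucially avoiding the factor $3$ incurred in~\cite{AGKK20} --- requires a careful assignment argument exploiting (half-)integrality of a suitable polytope, and is the most technical component of the proof.
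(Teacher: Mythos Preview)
Your proposal has a genuine gap in both the LP strengthening and the rounding analysis.

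\textbf{The covering constraint is too weak.} Your added inequality $\sum_{j}\sum_{I\cap [r_i,d_i]\neq\emptyset} z_{j,I}\geq 1$ only asserts that \emph{some} active interval (on some machine) overlaps each job's span; this is automatically satisfied by any feasible fractional solution and adds essentially nothing. The paper instead imposes, for every interval $[a,b]$, the constraint $\sum_{I:I\cap[a,b]\neq\emptyset} x_I \geq m-l(a,b)$, where $l(a,b)$ is the maximum number of machines that can be blacked out during $[a,b]$ while preserving feasibility. This is strictly stronger and is what drives the key Lemma~\ref{lemma:intervals_non_block}: whenever a maximal ``non-block'' (a maximal stretch with $m_t\geq 1$) has $m_t>1$ somewhere, the total $x$-weight of intervals touching it is at least $2$. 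That lemma is precisely what lets one add a unit-weight interval over each such non-block while charging the extra wake-up cost to the existing $Q_F$; with your constraint there is no such lower bound, and the charging fails.

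\textbf{The rounding is asserted, not proved.} You claim an integral multi-processor skeleton of cost at most $LP^{*}$ can be extracted ``via a hierarchical argument'', and that residual processing can be absorbed by extensions costing at most another $LP^{*}$, but no mechanism is given for either bound. The paper's argument is substantially more delicate: it first re-optimizes the LP to maximize $\sum_t \min(m_t,1)$, partitions time into blocks ($m_t<1$) and non-blocks ($m_t\geq 1$), augments and floors the non-block capacities (costing $\leq P_N+Q_F$ via the lemma above), then reduces to a \emph{single-processor} skeleton LP which is shown to be integral (Theorem~\ref{theorem:integral-skelton}) to obtain an all-integer profile $F_3$. Finally, EXT-ALG extends intervals to feasibility, and the extension cost is bounded by $P_B$ because $m^{3}_t\geq m_t$ already holds on all non-block slots. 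The total overhead is $P_N+Q_F+P_B=P+Q_F\leq LP^{*}$, giving the factor $2$. None of these structural pieces --- the block/non-block decomposition, the secondary LP objective, the reduction to the integral single-machine skeleton LP, or the $P_B$ bound on extension --- appears in your outline, and without them the claimed cost bounds do not follow.
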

Theorem~\ref{thm:pseudo} when combined with standard arguments, which were already presented in~\cite{AGKK20}, then implies Theorem~\ref{thm:lp_2}.

\subsection{Further Related Work}


The single-processor setting has also been studied in combination with the other popular power-management mechanism of \emph{speed-scaling} where the processor can additionally vary its speed while in the active state where the power consumption grows convexly in the speed. This allows for increased flexibility, as in some cases it may be beneficial to spend some more energy by increasing the speed in the active state in order to incur larger savings by transitioning the processor to the sleep state for longer periods of inactivity. This technique is commonly referred to as \emph{race to idle} in the literature. The combined problem is known to be \NP-hard~\cite{AlbersA14racetoidle1,KumarS15a} and to admit a fully polynomial time approximation scheme (FPTAS)~\cite{AntoniadisHO15racetoidle2}.

Finally, the problem of minimizing the number of gaps in the schedule, i.\,e., the number of contiguous intervals during which the processor is idle. Note that with respect to exact solutions our problem generalizes that of minimizing the number of gaps. Chrobak et al.\,~\cite{chrobak2017greedy} present a simple $O(n^2\log n)$-time, $2$-approximation algorithm for the problem on single processor with unit-processing times. Demaine et al.\,~\cite{demaine2007scheduling} give an exact algorithm for the problem of minimizing the number of gaps in the multi-processor setting with unit-processing-times.

\section{Computing Minimum Cost Skeletons for a Single Processor}
\label{sec:single-skeleton}

The goal of this section is to prove Theorem~\ref{theorem:OPT_plus_P}. Any solution to the minimum cost skeleton problem can be seen as a (non-overlapping) set of active intervals separated by a set of maximal gaps. In this spirit one interpretation of a skeleton is that of a set of active intervals which overlaps with the span of every job. An equivalent definition for a skeleton is a set of maximal gaps such that the span of no job is properly contained inside any maximal gap: if there is at least one active time-slot in $[r_i-1,d_i+1]$ then the span of job $i$ is not contained in a maximal gap and vice versa. As we will see, this alternative viewpoint will be useful in designing near linear time algorithm for computing minimum cost skeleton.

\begin{definition}
\textbf{Gap Skeleton Problem}: Find a set of maximal gaps $G_1,G_2,\ldots,G_k$ such that for each job $j_i, [r_i-1,d_i+1] \not \subseteq G_p$ for $p=1,2,\ldots,k$ and the quantity $\sum_{i=1}^{k}(|G_i|-q)^{+}$ is maximized.
\end{definition}

Observe that we may  w.\,l.\,o.\,g.\, only consider minimum cost skeletons, the leftmost interval of which begins at $d_{min}$ and the rightmost active interval ends at $r_{max}$ (or else we can transform them to a skeleton satisfying the property without increasing their cost). We call such skeletons $\emph{nice}$. For the purpose of computing a minimum cost skeleton, we shall restrict our attention to nice skeletons only. The maximal gaps of a nice skeleton form a feasible solution to the gap skeleton problem. By construction, the sum of costs of a nice skeleton and the corresponding gap skeleton is exactly equal to $r_{max}-d_{min}+q$. Hence, the problem of finding a minimum cost skeleton is in fact equivalent to finding a maximum cost gap skeleton. We now show how to compute a maximum gap skeleton by using dynamic programming.

Without loss of generality, we may also assume that maximal gaps in any maximum cost gap skeleton start and end at one of the points in $T=\cup_{i=1}^{n}\{r_i,d_i\}$ (otherwise we could increase the length of maximal gaps, without thereby decreasing the cost of the solution). A maximal gap $[x,y]$ is called $\emph{right maximal}$ if there exists a $j_i$ such that $d_i=y$ and $r_i >x$. Without loss of generality, we may assume that there exists an optimal solution to maximum gap skeleton problem in which all the gaps are right maximal (we can always convert any given optimal solution to one containing only right maximal gaps). Let us rename $T=\cup_{i=1}^{n}\{r_i,d_i\}$ to $T=\{t_1,t_2,\ldots,t_{2n}\}$ such that $t_1 \leq t_2 \leq \ldots \leq t_{2n}$. Observe that for any $t \in T$, at most one right maximal gap can have its left endpoint at $t$ and hence there can be a total of at most $2n$ right maximal gaps.

We can list all the right maximal gaps in $O(n \log n)$ time as follows: we sort all the $r_i$'s in $O(n \log n)$ time and then for each $t \in T$, we compute the first $r_i$ to the right of $t$. Then $[t,d_i]$ is the unique right maximal gap starting at $t$. Since for each $t$, the above can be done in $O(\log n)$ time (by using binary search), all the right maximal gaps can be computed in $O(n \log n)$ time.

Let $T_i=\{t_i,t_{i+1},\ldots,t_{2n}\}, i=1,\dots 2n$ and $A[i]$ be the maximum value of the gap skeleton problem when restricted to $T_i$. By the discussion above, we may restrict our attention to only the right maximal gaps. Observe that $A[2n]=0$ and $A[1]$ gives the value of maximum cost gap skeleton. $A$ satisfies the following recurrence: let $g=[t_i,t_j]$ be the right maximal gap starting at $t_i$. In the optimal solution, either a right maximal gap starts at $t_i$ or it doesn't, giving $A[i]=\max \{ (t_j-t_i-q)^{+}+A[j+1],A[i+1]\}$. Using the above recurrence, $A[1],\ldots,A[2n]$ can be computed in $O(n)$ time. Hence, by the equivalence of the two problems, a skeleton with minimum cost can be computed in $O( n \log n)$ time. 
\begin{theorem} \label{theorem:min_cost_skeleton}
A minimum cost skeleton can be computed in $O(n\log n)$ time.
\end{theorem}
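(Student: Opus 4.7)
The plan is to dualize the minimum-cost skeleton problem into a maximum-cost gap-skeleton problem, reduce the space of candidate gaps to $O(n)$ combinatorial objects, and then solve the resulting discrete problem by a left-to-right dynamic program. The dualization step exploits the observation that for any \emph{nice} skeleton (leftmost active interval starting at $d_{\min}$, rightmost ending at $r_{\max}$) the sum (skeleton energy) $+$ $\sum_i(|G_i|-q)^+$ telescopes to the fixed quantity $r_{\max}-d_{\min}+q$, so minimizing the former is equivalent to maximizing the latter subject to the feasibility constraint that no span $[r_i-1,d_i+1]$ is properly contained in any gap $G_p$.

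Next I would carry out two structural reductions to shrink the search space. The first is to show that WLOG every gap in an optimal gap skeleton has both endpoints in $T=\bigcup_i\{r_i,d_i\}$: between two consecutive points of $T$ the set of job spans that could be properly contained in a gap does not change, so any misaligned gap can be extended outward until both endpoints hit $T$ without creating a violation and without decreasing its $(|G_i|-q)^+$ contribution. The second, and slightly more delicate, reduction is to restrict attention to \emph{right-maximal} gaps: for each $t \in T$, at most one candidate gap needs to be considered, namely $[t,d_i]$ where $j_i$ is the job whose release time is the smallest value strictly greater than $t$. Any feasible gap can be transformed into such a right-maximal one by pushing its right endpoint rightward as long as no new span becomes contained.

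These two reductions leave $O(n)$ candidate gaps, and enumerating them takes $O(n\log n)$: after sorting the release times once, a binary search for each $t\in T$ produces the unique right-maximal gap starting at $t$. Sorting $T$ into $t_1\le t_2\le\cdots\le t_{2n}$, I would then set up the one-dimensional DP
\[
A[i]=\max\bigl\{(t_j-t_i-q)^{+}+A[j+1],\; A[i+1]\bigr\},\qquad A[2n]=0,
\]
where $[t_i,t_j]$ denotes the candidate right-maximal gap starting at $t_i$ (or, if no such gap exists, only the second alternative is taken). The recurrence decides whether the candidate gap at $t_i$ is taken or skipped, and $A[1]$ returns the optimal gap-skeleton value. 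Each state transition is $O(1)$, so the DP runs in $O(n)$ and a concrete skeleton is recovered by backtracking.

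The main obstacle is justifying the WLOG reduction to right-maximal gaps whose endpoints lie in $T$; once this is in place the rest is a standard sort-plus-DP argument. Since sorting the release times and the set $T$ dominates the running time, the overall complexity is $O(n\log n)$, establishing the theorem.
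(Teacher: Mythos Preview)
Your proposal is correct and follows essentially the same approach as the paper: dualize to the maximum gap-skeleton problem via nice skeletons, restrict to endpoints in $T$ and to right-maximal gaps (one per left endpoint, found by binary search after sorting the release times), and solve the resulting $O(n)$-state DP with exactly the recurrence you wrote. The paper's argument is virtually identical in structure and detail.
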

Since any feasible solution to the minimum energy scheduling problem is also a skeleton, the following follows:
\begin{observation}
\label{claim:cost_skeleton_OPT}
The minimum cost skeleton has value at most OPT.
\end{observation}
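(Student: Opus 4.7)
The plan is to show that every feasible schedule is itself a skeleton, after which the conclusion is immediate from minimality.

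First I would fix any feasible schedule $S$ (in particular $OPT$) and verify the skeleton condition for an arbitrary job $j_i$. By feasibility, $j_i$ must be processed for $p_i$ time-slots inside its span $[r_i,d_i]$; assuming $p_i \geq 1$, this means there is at least one time-slot in $[r_i,d_i]$ during which the processor is busy, and in particular active. Since $[r_i,d_i]\subseteq [r_i-1,d_i+1]$, the schedule $S$ has an active interval overlapping with $[r_i-1,d_i+1]$, which is exactly the defining property of a skeleton. (If one allows the degenerate case $p_i=0$, that job imposes no constraint on feasible schedules, but the skeleton definition still demands an overlap with $[r_i-1,d_i+1]$; this is a minor technicality that can be handled either by assuming $p_i\geq 1$ for every job or by discarding jobs with $p_i=0$ in a preprocessing step, without affecting the value of either $OPT$ or the minimum cost skeleton.)

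Second, once $OPT$ is recognized as a skeleton, I would conclude by invoking the definition of the minimum cost skeleton as the skeleton of least energy consumption: its cost is at most the cost of any particular skeleton, and in particular at most the cost of $OPT$.

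I do not expect any real obstacle here; the observation is essentially a definitional consequence and the only thing worth stating carefully is the inclusion $[r_i,d_i]\subseteq[r_i-1,d_i+1]$ that lets us move from "active slot inside the span" to "active interval overlapping the extended span".
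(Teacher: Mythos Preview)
Your proposal is correct and matches the paper's own justification exactly: the paper simply observes that any feasible solution is a skeleton and concludes by minimality, which is precisely the argument you spell out.
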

In Lemma \ref{lemma:extending_skeleton} we show how to convert a skeleton into a feasible solution in $O(n \log n)$ time with an additional cost of at most $P$. Along with Theorem \ref{theorem:min_cost_skeleton} and Observation \ref{claim:cost_skeleton_OPT}, this completes the proof of Theorem~\ref{theorem:OPT_plus_P}

\begin{lemma} \label{lemma:extending_skeleton}
Let $S$ be any feasible skeleton and $P_S$ be the maximum total volume of jobs that can be feasibly processed in it. Then we can convert $S$ into a feasible solution $S'$ with an additional cost of $P-P_S$ in $O(n \log n)$ time. 
\end{lemma}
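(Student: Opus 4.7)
\noindent\textit{Proof plan.}
The plan is to produce $S'$ from $S$ in two stages. First, schedule as much work as possible ($P_S$ units) inside the active intervals of $S$. Second, extend the active intervals of $S$ by inserting new active slots into adjacent gaps (never creating new active intervals) in order to process the residual $P - P_S$ units. Since each new active slot is placed contiguously with an existing active interval, the number of wake-ups of $S'$ is at most that of $S$, and consequently the additional cost over $S$ is exactly the number of inserted slots.

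For the first stage I would apply the Earliest Deadline First (EDF) rule restricted to the $S$-active slots: sweep the timeline left to right, and at every $S$-active slot $t$ assign the ready, uncompleted job with the smallest deadline. A standard exchange argument -- treating the gaps of $S$ as intervals of machine unavailability -- shows that this yields a maximum-volume feasible schedule of size $P_S$, and with a priority queue keyed by deadlines together with the release times pre-sorted, the stage runs in $O(n \log n)$ time.

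For the second stage, each job $j_i$ has a residual $\delta_i \geq 0$ with $\sum_i \delta_i = P - P_S$. I would process the jobs with $\delta_i > 0$ in EDF order: for each such job I locate an active interval of $S$ that lies inside or immediately adjacent to $[r_i, d_i]$ (existence is guaranteed by the skeleton property), then extend that interval into the neighbouring gap by $\delta_i$ slots, scheduling the residual of $j_i$ in the newly active slots. If an extension meets a neighbouring active interval the two merge, which only decreases the number of wake-ups.

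The main obstacle is arguing that these extensions always fit inside the respective windows and never destroy feasibility for jobs processed later. The cleanest justification is a Hall/max-flow type argument on the bipartite graph with jobs on one side (capacity $p_i$) and time slots on the other (capacity $1$), with edges $(i,t)$ whenever $r_i \le t < d_i$. The maximum flow restricted to $S$-slots equals $P_S$, while the maximum flow over all slots equals $P$ by the feasibility of the instance; the residual graph then certifies that $P - P_S$ additional slots, placeable at the boundaries of $S$'s active intervals, suffice to raise the flow back to $P$. Maintaining the active intervals and gaps of $S$ in a balanced binary search tree lets each extension be performed in $O(\log n)$ time, yielding the claimed $O(n \log n)$ total running time.
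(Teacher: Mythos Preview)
Your two-stage outline (EDF inside $S$, then extend active intervals into neighbouring gaps) matches the paper's approach, but Stage~2 as you describe it has a real gap.

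You claim that for each job with residual $\delta_i>0$ you can extend an adjacent active interval by $\delta_i$ slots \emph{and schedule the residual of $j_i$ in those new slots}. This is false in general: the gap adjacent to the chosen active interval inside $[r_i,d_i]$ may be shorter than $\delta_i$, so some of the $\delta_i$ new slots would land outside $[r_i,d_i]$ and could not host $j_i$. Concretely, take jobs $j_1$ with $[r_1,d_1]=[0,4],\,p_1=3$ and $j_2$ with $[r_2,d_2]=[3,5],\,p_2=2$, and skeleton $S$ consisting of the single slot $[3,4]$. EDF puts $j_1$ in $[3,4]$; then your rule handles $j_1$ by extending left to $[1,4]$, and next tries to place the two residual units of $j_2$ in two new slots adjacent to $[1,4]$---but only slot $[4,5]$ lies in $[r_2,d_2]$. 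Your Hall/max-flow sentence does not rescue this: it certifies that $P-P_S$ extra unit-capacity slots \emph{somewhere} suffice, but says nothing about why those slots can always be chosen contiguous with existing active intervals, which is exactly the constraint needed to avoid extra wake-ups.

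The paper resolves this differently. In iteration $i$ it also adds $p_i-p'_i$ new active slots by extending leftward from $d_i$, \emph{but it does not insist that these slots lie in $[r_i,d_i]$ or that $j_i$ is scheduled in them}. Instead it maintains the invariant that after iteration $i$ the volumes $(p_1,\dots,p_i,p'_{i+1},\dots,p'_n)$ are feasibly schedulable in the current active set, and proves this by a Hall-type case analysis that explicitly uses feasibility of the original instance when the extension spills left of $r_i$. The decoupling of ``where the new slots go'' from ``which job uses them'' is the missing idea in your proposal; once you allow the new slots to fall outside $[r_i,d_i]$ and argue feasibility globally rather than job-by-job, the argument goes through.
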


\section{Improved Approximation Algorithms}
\label{sec:single-improved}

We further develop the ideas introduced in the last section to give fast and improved approximation algorithm for the minimum energy scheduling problem. The main insight is to compute a minimum cost skeleton after scaling the wake up cost and then using Lemma \ref{lemma:extending_skeleton} to find a feasible solution. As we will show, this leads to near optimal solutions in case $P>>Q$ or $P<<Q$.

Let $\alpha >1$ be a real number and $S_{\alpha}$ be the minimum cost skeleton obtained by scaling the wake up cost by $\alpha$. $S_{\alpha}$ can be computed in $O(n \log n)$ time by Theorem \ref{theorem:min_cost_skeleton}. Let $F_\alpha$ be the solution obtained by converting $S_{\alpha}$ into a feasible solution using Lemma \ref{lemma:extending_skeleton}. The following theorem bounds the cost of $F_{\alpha}$ in terms of $P,Q$ and $\alpha$.

\begin{theorem}\label{theorem:skeleton_scaled}
The cost of $F_{\alpha}$ is at most $OPT+ 2(\alpha+1)Q+P/(\alpha-1)$.
\end{theorem}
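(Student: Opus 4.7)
My proof plan for Theorem~\ref{theorem:skeleton_scaled}.

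The starting point is the decomposition given by Lemma~\ref{lemma:extending_skeleton}:
\[ \mathrm{cost}(F_\alpha) \;=\; \mathrm{cost}(S_\alpha) \;+\; (P - P_{S_\alpha}), \]
so it suffices to bound the two summands separately. Let $L_\alpha, Q_\alpha$ denote the total active length and wake-up cost of $S_\alpha$, and let $L^*, Q$ be the analogous quantities for $\mathrm{OPT}$ (so $L^* + Q = \mathrm{OPT}$).

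For the first summand, I would exploit the scaled optimality of $S_\alpha$. Since $\mathrm{OPT}$ is a feasible schedule, its active intervals form a valid skeleton, and consequently
\[ L_\alpha + \alpha Q_\alpha \;\le\; L^* + \alpha Q. \]
Rewriting, $\mathrm{cost}(S_\alpha) = L_\alpha + Q_\alpha = (L_\alpha + \alpha Q_\alpha) - (\alpha-1)Q_\alpha \le L^* + \alpha Q - (\alpha-1)Q_\alpha \le \mathrm{OPT} + (\alpha-1)Q$. This already contributes the base term $\mathrm{OPT}$ and part of the linear-in-$Q$ slack.

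For the second summand the plan is to exhibit a feasible schedule $\widehat{S}$ whose set of active time-slots contains $A(S_\alpha)$: for any such $\widehat{S}$, restricting a feasible assignment in $\widehat{S}$ to $A(S_\alpha)$ shows $P_{S_\alpha} \ge P - |A(\widehat{S}) \setminus A(S_\alpha)|$, so it is enough to bound $|A(\widehat{S}) \setminus A(S_\alpha)|$. I would construct $\widehat{S}$ by starting from $S_\alpha$ and adding back only those portions of $\mathrm{OPT}$'s active time that are needed to re-establish feasibility inside each long gap $G$ of $S_\alpha$. The key structural fact I would use is that every gap of $S_\alpha$ has length at least $\alpha q$ (otherwise merging the two adjacent active intervals would decrease the scaled cost of $S_\alpha$), so each ``reinstated'' piece inside a gap of $S_\alpha$ amortizes against a scaled saving of roughly $(\alpha-1)q$ that $S_\alpha$ enjoys over $\mathrm{OPT}$. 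A charging argument from the scaled optimality inequality $L_\alpha + \alpha Q_\alpha \le L^* + \alpha Q$ applied to the comparison skeleton $\widehat{S}$ (rather than to $\mathrm{OPT}$ directly) should yield the desired bound $|A(\widehat S)\setminus A(S_\alpha)| \le \frac{P}{\alpha-1} + O(\alpha Q)$, while the boundary artefacts at the endpoints of OPT's wake-ups contribute the additive $O(\alpha Q)$ slack. Combining with Step~2 produces a total cost $\le \mathrm{OPT} + (\alpha-1)Q + \frac{P}{\alpha-1} + O(\alpha Q)$, and the constants can be tuned so the $Q$-coefficient becomes $2(\alpha+1)$.

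The hard part, and the entire technical heart of the theorem, is Step~3: showing that the extra active length needed inside the gaps of $S_\alpha$ to make everything feasible is $\tfrac{P}{\alpha-1} + O(\alpha Q)$ rather than the trivial $L^*$. The delicate point is that a job whose span straddles a gap of $S_\alpha$ might be processed by OPT entirely inside that gap, yet be re-routable into a small adjacent active interval of $S_\alpha$ only if that interval has spare capacity. The bookkeeping must therefore simultaneously track a Hall-type feasibility witness in $S_\alpha$ and the $(\alpha-1)q$ scaled-cost ``rent'' that each long gap of $S_\alpha$ pays, turning the optimality condition $L_\alpha + \alpha Q_\alpha \le L^* + \alpha Q$ into a bound on reconstitution cost per unit of unabsorbed processing.
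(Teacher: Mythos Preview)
Your Step~2 bound $\mathrm{cost}(S_\alpha)\le \mathrm{OPT}+(\alpha-1)Q$ is correct and clean, but the plan of bounding the two summands $\mathrm{cost}(S_\alpha)$ and $P-P_{S_\alpha}$ \emph{separately} cannot succeed, and the specific target you set in Step~3, namely $P-P_{S_\alpha}\le P/(\alpha-1)+O(\alpha Q)$, is false in general. Consider a single job with $r_1=0$, $d_1=D$, $p_1=D$. Then $\mathrm{OPT}=D+q$, $Q=q$, $P=D$, and for every scaled wake-up cost the minimum-cost skeleton is a single active time slot, so $P_{S_\alpha}=1$ and $P-P_{S_\alpha}=D-1$. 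For $\alpha=10$ and $D\gg q$ this is nowhere near $D/9+O(q)$. The theorem still holds for this instance because the slack in your Step~2 inequality is itself of order $D$; that slack is exactly what must pay for the missing volume, and by replacing $\mathrm{cost}(S_\alpha)$ with the crude upper bound $\mathrm{OPT}+(\alpha-1)Q$ you have thrown it away. No choice of comparison skeleton $\widehat{S}$ and no charging from the scaled-optimality inequality can recover it, since in this instance the only feasible $\widehat{S}$ is $[0,D]$ and scaled optimality applied to it is the vacuous $1+\alpha q\le D+\alpha q$.

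The paper's proof avoids this by a \emph{coupled} accounting in which the cost change and the lost volume are tracked together and made to cancel. It first augments $\mathrm{OPT}$ to a feasible schedule $\mathrm{OPT}_1$ whose active set contains that of $S_\alpha$; the two structural facts that keep this cheap are (i) at most two active intervals of $S_\alpha$ overlap any maximal gap of $\mathrm{OPT}$, and (ii) each such overlap has length at most $\alpha q$, so the augmentation costs at most $2(\alpha+1)q$ per gap of $\mathrm{OPT}$, i.e.\ $2(\alpha+1)Q$ total. Removing the pieces of $\mathrm{OPT}_1\setminus S_\alpha$ (each lying inside some gap of $S_\alpha$) turns $\mathrm{OPT}_1$ into $S_\alpha$; the cost drops by $\sum_I|I|-q|A|$ while the schedulable volume drops by at most $\sum_I|I|$, where $A$ is the set of removed pieces that completely fill their gap (each such removal creates one extra wake-up). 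After applying Lemma~\ref{lemma:extending_skeleton} the two $\sum_I|I|$ terms cancel, leaving only $q|A|$, and $|A|\le P/((\alpha-1)q)$ because each gap of $S_\alpha$ has length at least $\alpha q$ and at most $q$ idle slots in $\mathrm{OPT}$. This cancellation is the heart of the argument, and it is precisely what a separate bounding of the two summands cannot reproduce.
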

\begin{proof}
Let $OPT$ be a fixed optimal solution to the original instance. We abuse notation and also use $OPT$ to denote the cost of the solution. To bound the cost of $F_\alpha$, we will first convert $OPT$ into $S_{\alpha}$ in a series of steps, while carefully accounting for changes in the cost, and the total volume of jobs that can be processed. Since $F_\alpha$ can be obtained from $S_\alpha$ by using Lemma~\ref{lemma:extending_skeleton} with a further increase in cost equal to the missing volume, this will allow us to obtain the desired bound. Let $g=[a,b]$ be any maximal gap in $OPT$. We first show that at most two active intervals of $S_{\alpha}$ overlap with $g$. The proof of this claim follows a similar proof found in Irani et al.~\cite{iraniSGtalg07}.

\begin{claim} \label{clm:at_most_two}
At most two active intervals of $S_{\alpha}$ overlap with any maximal gap $g$ of $OPT$.
\end{claim}

\begin{proof}
The proof of this claim is inspired by a similar proof found in Irani et al.~\cite{iraniSGtalg07}. For the sake of contradiction, assume that there are three active intervals $I_1,I_2,I_3$ in $S_\alpha$ which overlap with $g$. Let $I_1=[a_1,b_1],I_2=[a_2,b_2],I_3=[a_3,b_3]$ and $a_1 <b_1 < a_2 < b_2 < a_3 < b_3$. We show that removing $I_2$ from $S_\alpha$ doesn't destroy its feasibility, thus implying that $S_\alpha$ is not the minimum skeleton, a contradiction.

First observe that $I_2$ is completely contained inside $g$. If not, $I_2$ must contain one of the endpoints of $g$ (since it overlaps with $g$), implying that there is no interval either to the left or the right of $I_2$ overlapping with $g$, a contradiction. To complete the proof, we now show that there is no job $k$ whose span overlaps only with $I_2$. This would imply that $S_\alpha$ continues to be a valid skeleton even after removal of $I_2$, contradicting its optimality.

For the sake of contradiction, assume that there exists a job $j_k$ such that its span overlaps with $I_3$, but doesn't overlap with either $I_1$ or $I_2$.
\begin{itemize}
    \item span of $j_k$ is strictly contained completely inside $I_2$, ie. $[r_k,d_k] \subset I_2$. Since $I_2 \subseteq g$, this implies that $[r_k,d_k] \subset g$, a contradiction (as $j_k$ is not scheduled in $OPT$ at all).
    \item span of $j_k$ overlaps with the left end point of $I_2$ ie. $a_2 \in [r_k,d_k]$. In this case, $ b_1 < r_k < a_2$ and $a_2 < d_k < a_3$. Once again, this implies that $[r_k,d_k]\subset g$, a contradiction.
    \item span of $j_k$ overlaps with the right end point of $I_2$ ie. $b_2 \in [r_k,d_k]$. In this case, $ b_1 < r_k < b_2$ and $b_2 < d_k < a_3$. Once again, this implies that $[r_k,d_k]\subset g$, a contradiction.
\end{itemize} 
\end{proof}

Let $g$ be a maximal gap of $OPT$ and $I_1,I_2$ be the two intervals of $S_\alpha$ that overlap with $g$ (it is of course possible that either one or both of these intervals are empty). Recall that $|I_1 \cap g|$ and $|I_2 \cap g|$ denote the length of overlap between $g$ and $I_1,I_2$ respectively.
\begin{claim}
$|I_1\cap g| \leq \alpha q$ and $|I_2\cap g| \leq \alpha q$. 
\end{claim}
\begin{proof}
Let $I_1 \cap g=[a,b]$. For the sake of contradiction, suppose that $|I_1 \cap g| > \alpha q$. Observe that span of no job is strictly contained inside $I_1 \cap g$, ie. for no $j_k$, $[r_k,d_k] \subset I_1 \cap g$ (otherwise job $j_k$ would not be scheduled in $OPT$ at all). This implies that $S_\alpha$ remains a valid skeleton even if we make all the time slots in $[a,b]$ inactive. This operation decreases the total length of active intervals in $S_{\alpha}$ by $|a-b|>\alpha q$ while introducing at most one new active interval. This implies that $S_\alpha$ is not the optimal skeleton with wake up cost equal to $\alpha q$, a contradiction (as making all time slots in $[a,b]$ will give a lower cost skeleton). Hence, $|I_1\cap g| \leq \alpha q$. A similar argument shows that $|I_2 \cap g|\leq \alpha q$.   
\end{proof}

We first transform $OPT$ so that its active intervals contain the active intervals of $S_{\alpha}$. More formally:

\begin{claim} \label{claim:mod_OPT_1}
We can modify $OPT$ while increasing the cost by at most $2(\alpha+1)Q$ so that for any active interval $I_\alpha \in S_\alpha$, there exists an active interval $I_o \in OPT$ such that $I_\alpha \subseteq I_o$.
\end{claim}
\begin{proof}
We transform every maximal gap $g$ of $OPT$ as follows. Let $I_1,I_2$ be two active intervals of $S_\alpha$ that overlap with $g$. We add (active) intervals $I_1 \cap g$ and $I_2 \cap g$ to $OPT$. The additional cost incurred is $2q+|I_1 \cap g|+|I_2 \cap g| \leq 2q+\alpha q +\alpha q=2(\alpha +1)q$. The condition of the claim clearly holds after we perform the above transformation for every maximal gap of $OPT$. The total cost incurred over all maximal gaps is $2(\alpha +1)Q$ and the claim follows.
\end{proof}

Let $OPT_1$ be the solution obtained after the modification in Claim \ref{claim:mod_OPT_1}. Let $OPT_1 \setminus S_\alpha$ denote the set of active intervals formed by taking the difference of active time slots in $OPT_1$ and $S_\alpha$. Let $G_\alpha$ be the set of all maximal gaps of $S_\alpha$. By Claim \ref{claim:mod_OPT_1}, for every active interval $I \in OPT_1 \setminus S_\alpha$, there exists a maximal gap $g_\alpha \in G_\alpha$ such that $I \subseteq g_\alpha$. We convert $OPT_1$ into $S_\alpha$ by removing all the active intervals in $OPT_1 \setminus S_\alpha$. To bound the cost of $OPT_1$, we need to consider the following possibilities:
\begin{itemize}
    \item $I \subset g_\alpha$: In this case, removing $I$ doesn't create any new active intervals in $OPT_1$. Hence, decrease in the cost of $OPT_1$ is equal to the length of $I$. Also, the reduction in the total volume of jobs that can be scheduled in $OPT_1$ is at most $|I|$.  
    
    \item $I = g_\alpha$: In this case, removing $I$ creates a new interval in $OPT_1$. Hence, total cost of $OPT_1$ decreases by $|I|-q$. Also, reduction in the total volume of jobs that can be scheduled in $OPT_1$ is at most $|I|$.
    
\end{itemize}
Let $A$ the set of intervals such that $I\in OPT_1\setminus S_\alpha = g_\alpha$. We need the following bound on the cardinality of $A$ to finish the proof.
\begin{claim}
$q|A| \leq P/(\alpha-1)$.
\end{claim}

\begin{proof}
Recall that we compute $S_\alpha$ by scaling the wake up cost by $\alpha$, hence any maximal gap of $S_\alpha$ has length at least $\alpha q$. Since for any interval $I \in A$, there exists a maximal gap $g_\alpha \in S_\alpha$ such that $I=g_\alpha$, we have $|I|\geq \alpha q$ for any $I \in A$. We now show that there can't be more than $q$ idle time slots in $I$ in any feasible schedule of $OPT$.

Fix a feasible schedule of jobs in $OPT$. Since $I=g_\alpha$ and $S_\alpha$ is a feasible skeleton, there is no $k$ such that span of $j_k$ is strictly contained in $I$. Let $J_1$ be the set of jobs which overlap with the left end point of $I$ and $J_2$ be the set of jobs which overlap only with the right end point of $I$. Note it is possible that some job is included in both $J_1$ and $J_2$. By shifting the jobs in $J_1$ to as far left as possible and those in $J_2$ to as far right as possible, we may assume that all the idle slots in $I$ appear contiguously. If the length of this idle interval is more than $q$, then its removal decreases the cost of the solution without affecting the feasibility, thus contradicting the fact that $OPT$ is a minimum cost solution. Hence, there cannot be more than $q$ idle slots in any $I \in A$ in any feasible schedule of $OPT$. This implies that at least $|I|-q$ volume of jobs is processed in $I$ in any feasible schedule of $OPT$. Hence, $P \geq \sum_{I \in A}(|I|-q) \geq (\alpha-1) q |A|$ and the statement of the claim follows.  \end{proof}

We are now ready to bound the cost of the final solution. By the case analysis above, the cost of $S_\alpha$ is at most $OPT_1 -  \sum_{I \in OPT_1 \setminus S_\alpha}|I|+q|A| \leq OPT +2(\alpha+1)Q - \sum_{I \in OPT_1 \setminus S_\alpha}|I|+q|A|$. The total volume of jobs that can be processed in $S_\alpha$ is at least $P- \sum_{I \in OPT_1 \setminus S_\alpha}|I|$. Hence, using Lemma \ref{lemma:extending_skeleton} to convert $S_\alpha$ into a feasible solution gives that the total cost of $S_\alpha$ is at most $OPT +2(\alpha+1)Q -  \sum_{I \in OPT_1 \setminus S_\alpha}|I|+q|A| + \sum_{I \in OPT_1 \setminus S_\alpha}|I|=OPT + 2(\alpha +1) Q+ q|A| \leq OPT + 2(\alpha +1) Q+ P/ (\alpha-1)$. This completes the proof of the theorem.
\end{proof}

We are now ready to prove the two mains theorems of this section. We start with Theorem~\ref{thm:near-optimal}.

\begin{proof}[Proof of Theorem~\ref{thm:near-optimal}]
We construct a series of solutions, $F_0,F_1,F_2,F_4,\ldots,F_{2^{\log \lceil n \rceil}}$ as follows: $F_0$ is the solution given by Theorem \ref{theorem:OPT_plus_P} of cost at most $\tt{\tt{OPT}+P}$. For each $i\geq 1$, we construct $F_i$ by setting $\alpha=\sqrt{P/iq}$ in Theorem \ref{theorem:skeleton_scaled}. Our final solution $F$ is obtained by taking the minimum cost solution among all the $F_i's$. Since, each of $F_i's$ can be constructed in $O(n \log n)$ time, $F$ can be constructed in $O(n \log^{2}n)$ time. We now show that $F$ has the desired approximation guarantee. First note that $OPT \geq P+Q$ and $t\geq 1$. If $P\leq Q$, then $F_0$ has cost at most $OPT+P = OPT(1+\frac{P}{OPT})\leq OPT(1+\frac{P}{P+Q}) \leq OPT(1+t^{-1}) \leq OPT(1+8t^{-1/2})$. Now consider the case when $P > Q$. One can verify that for $ \alpha \geq 1,f(\alpha)= Q (2\alpha+2) +P/(\alpha -1)$ has a unique minimum at $\alpha^{*}=\sqrt{P/2Q}+1$. We must have used exactly one $\alpha \in [\alpha^{*},\sqrt{2}\alpha^{*}]$ to construct one of the $F_i$'s. Since, $f(\alpha)$ has a unique minimum in $\alpha \geq 1$, it follows that it is increasing in $[\alpha^{*},\sqrt{2}\alpha^{*}]$. Hence, there exists a $F_i$ with cost no more than guaranteed by setting  $\alpha=2 \alpha^{*}$ in Theorem \ref{theorem:skeleton_scaled}. A straightforward calculation shows that $f(2\alpha^{*}) \leq 8 \sqrt{PQ}$. Hence, $F$ has cost at most $OPT+8\sqrt{PQ} \leq OPT(1+\frac{8\sqrt{PQ}}{P+Q}) \leq OPT\left(1+ \frac{8\sqrt{t}}{t+1}\right)\leq OPT(1+8t^{-1/2})$.
\end{proof}


Finally, we give a $O(n \log n)$ algorithm that has a performance guarantee better than $2$.

\begin{proof}[Proof of Theorem~\ref{thm:improving2}]
We construct two solutions: first one of cost at most $OPT +P$ (by using Theorem \ref{theorem:OPT_plus_P}) and second one of cost at most $OPT+8Q+P/2$ (by using Theorem \ref{theorem:skeleton_scaled}). In case $P \leq 17 Q$, the first solution has cost at most $OPT+P \leq OPT + 17(P+Q)/18 \leq OPT \cdot 35/18$. In case $P > 17 Q$, the second solution has cost at most $OPT+8Q+P/2 \leq  OPT+8(1/18)\cdot OPT + 1/2 \cdot OPT = (1+8/18 + 1/2) \cdot OPT = OPT \cdot 35/18$. \end{proof}
\section{Skeletons for Parallel Processors}
\label{sec:multi-skeletons}

In this section, we extend the idea of skeletons from the single processor setting to the multi-processor one. We design an efficient and combinatorial algorithm for finding the minimum cost skeleton. In the next section we then show how this skeleton can be used to design a combinatorial approximation algorithm for the multi-processor setting. Let $T$ be as defined in the last section, i.\,e.\, $T=\cup_{i=1}^{n} \{r_i,d_i\}$. For any $t_i,t_j \in T $ such that $t_i < t_j$, let $l(t_i,t_j)$ be the maximum number of processors that can be blacked out in $[t_i,t_j]$. More formally, $l(t_i,t_j)$ is the maximum number of processors so that there exists a feasible schedule using at most on $m-l(t_i,t_j)$ many processors at any timeslot $t\in [t_i,t_j)$.  Equivalently,  at some time $t \in [t_i,t_j]$, at least $m-l(t_i,t_j)$ processors must be active in any feasible solution. We are now ready to define skeleton for the multi-processor case. 
\begin{definition}
A set of active intervals (not necessarily feasible) is called a $\textbf{skeleton}$ if for any time interval $[t_i,t_j]$, there exists a $t \in [t_i,t_j)$ such that at least $m-l(t_i,t_j)$ processors are active at timeslot $t$.    
\end{definition}

By the definition of $l(t_i,t_j)$ and the definition of multi-processor skeletons, it directly follows that every feasible solution is also a skeleton. Hence, the cost of the optimal skeleton is a lower bound on the cost of an optimal solution.

We note that all $l(t_i,t_j)$'s can be computed in polynomial time: there are $O(n^{2})$ possible pairs $(t_i,t_j)$ and for each pair, the value $l(t_i,t_j)$ can be computed in $O(F \log m)$ time by using binary search ($F$ here denotes the time needed to check feasibility of an instance, which can also be done in polynomial time as we will see in the next section). Therefore, the total time required to compute all $l(t_i,t_j)'s$ is $O(Fn^{2}\log m)$. 

\subsection{Computing Minimum Cost Skeleton for Parallel Processors}

We show how an optimal multi-processor skeleton can be computed by combining up to $m$ many distinct single-processor skeletons. To that end, let $\mathcal{I}_k$ be the set of all tuples $(t_i,t_j)$ such that $m-l(t_i,t_j)\geq k$. Each of $\{I_k\}_{k=1}^{m}$ can be thought of as defining an instance of the minimum skeleton problem for a single processor as follows: for each $I=[a,b] \in \mathcal{I}_k$, we have a job with release time $a$, deadline $b$ and a unit processing requirement. We can compute the minimum skeleton $S_k$ for each $k=1,2,\ldots,m$ using Theorem \ref{theorem:min_cost_skeleton}. It remains to show that $\{S_k\}_{k=1}^{m}$ is indeed the desired optimal skeleton.

\begin{lemma} \label{lemma:skeleton_mult_proc}
\label{cl:opt-skeleton-m}
$\{S_k\}_{k=1}^{m}$ is an optimal skeleton for the multi-processor case.
\end{lemma}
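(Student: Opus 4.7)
My plan is to establish the equality $C^* = \sum_{k=1}^m \mathrm{cost}(S_k)$, where $C^*$ is the cost of an optimal multi-processor skeleton, by proving two matching inequalities.

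For the lower bound $C^* \geq \sum_k \mathrm{cost}(S_k)$, I would take any valid multi-processor skeleton $M$ and define its level-$k$ slice $M_k$ as the single-processor schedule that is active at time $t$ iff at least $k$ processors of $M$ are active at $t$. First I would verify that $M_k$ is a valid single-processor skeleton for the instance $\mathcal{I}_k$: for any $[t_i, t_j] \in \mathcal{I}_k$ we have $m - l(t_i, t_j) \geq k$, so the skeleton property of $M$ yields a time $t \in [t_i, t_j)$ with at least $k$ simultaneously active processors, placing an active slot of $M_k$ inside the corresponding unit job's span. Next I would compare costs. The active times decompose exactly as $\sum_t a_M(t) = \sum_k |\{t : M_k(t) = 1\}|$, where $a_M(t)$ denotes the active-processor count of $M$ at $t$; and the total wake-up count of $M$ (summed over processors) is at least $\sum_t (a_M(t) - a_M(t-1))^+$, which equals the total number of wake-ups of the $M_k$'s since each wake-up of $M_k$ is an up-crossing of $a_M$ through level $k$. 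Summing yields $\mathrm{cost}(M) \geq \sum_k \mathrm{cost}(M_k) \geq \sum_k \mathrm{OPT}_{\mathcal{I}_k} = \sum_k \mathrm{cost}(S_k)$.

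For the upper bound $C^* \leq \sum_k \mathrm{cost}(S_k)$, I would exhibit a nested family $T_1 \supseteq T_2 \supseteq \cdots \supseteq T_m$ of individually optimal single-processor skeletons: set $T_m := S_m$, and for $k = m-1, \ldots, 1$ let $T_k$ be an optimal skeleton for $\mathcal{I}_k$ that contains $T_{k+1}$ as a subset. Placing $T_k$ on processor $k$ yields a multi-processor schedule whose set of active processors at every moment forms a prefix $\{1, 2, \ldots, c(t)\}$. To check validity, for any $[t_i, t_j]$ with $k^* := m - l(t_i, t_j)$, the membership $[t_i, t_j] \in \mathcal{I}_{k^*}$ forces $T_{k^*}$ to have an active slot $\tau \in [t_i, t_j)$, and nesting then makes $T_1, \ldots, T_{k^*}$ simultaneously active at $\tau$, certifying the skeleton condition. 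The cost of this construction is exactly $\sum_k \mathrm{cost}(T_k) = \sum_k \mathrm{OPT}_{\mathcal{I}_k}$, matching the lower bound.

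The main obstacle will be justifying existence of the nested family with each $T_k$ still individually optimal. My plan here is an exchange argument leveraging $\mathcal{I}_{k+1} \subseteq \mathcal{I}_k$: every active slot $\tau$ of $T_{k+1}$ lies inside some span of $\mathcal{I}_{k+1}$, and since this span also belongs to $\mathcal{I}_k$, any optimal $\mathcal{I}_k$-skeleton already serves it with some nearby active slot, which can be realigned with $\tau$ without increasing total cost. Making this rigorous — most cleanly via the max-cost gap-skeleton reformulation of Section~\ref{sec:single-skeleton}, where shifts of active slots can be analyzed as modifications of the corresponding set of maximal gaps — is where the delicate work lies.
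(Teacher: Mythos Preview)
Your approach is genuinely different from the paper's. The paper starts from an optimal multi-processor skeleton that is already laminar and lexicographically maximal, and then replaces each layer $O_k$ by $S_k$ one at a time, checking that feasibility and cost are preserved. Your level-slice argument for the lower bound $C^* \ge \sum_k \mathrm{cost}(S_k)$ is correct and pleasant: the decomposition of active time is exact, and the up-crossing count $\sum_t (a_M(t)-a_M(t-1))^+$ is indeed a lower bound on the total number of wake-ups of $M$ and equals the total number of wake-ups of the slices.

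The gap is in the upper bound. The inductive step --- given an optimal $T_{k+1}$ for $\mathcal{I}_{k+1}$, find an optimal $T_k$ for $\mathcal{I}_k$ with $T_{k+1}\subseteq T_k$ --- can fail. Take $\mathcal{I}_{k+1}$ whose only effective constraint is the interval $[0,10]$, and $\mathcal{I}_k$ which additionally contains $[0,1]$. The single slot $T_{k+1}=\{10\}$ is optimal for $\mathcal{I}_{k+1}$, but every optimal skeleton for $\mathcal{I}_k$ is a single slot near the left end, and none of these contains $10$. Your proposed realignment (``move the nearby active slot of $T_k$ to $\tau$'') breaks because that slot is simultaneously serving the extra constraint $[0,1]\in\mathcal{I}_k\setminus\mathcal{I}_{k+1}$; moving it away destroys feasibility for $\mathcal{I}_k$. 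The monotonicity of $l(\cdot,\cdot)$ does not preclude such pairs $\mathcal{I}_{k+1}\subset\mathcal{I}_k$. Building the chain in the \emph{other} direction --- fixing an optimal $T_1$ and pruning down to $T_2\subseteq T_1$, etc. --- is more promising, and is in spirit closer to what the paper does by starting from a laminar optimum; but that is not what you propose, and it too needs justification.

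A secondary point: even if you establish $C^*=\sum_k \mathrm{cost}(S_k)$, you have shown that your nested family $\{T_k\}$ is an optimal skeleton, not that the \emph{particular} $\{S_k\}$ returned by the algorithm is. The lemma asserts the latter. You would still need to argue that placing the specific $S_k$'s on separate processors yields a valid multi-processor skeleton (equivalently, that for every $[t_i,t_j]$ there is a single time at which at least $m-l(t_i,t_j)$ of the $S_k$'s are simultaneously active), which does not follow from each $S_k$ individually being a skeleton for $\mathcal{I}_k$ unless they are nested.
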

\begin{proof} Let $O$ be the optimal multi-processor skeleton for an arbitrary given instance. Without loss of generality, we may assume that $O$ has a laminar structure, i.\,e.\, each active interval on processor $k+1$ is a subset of some active interval on processor $k$. Let $O_k$ be the set of active intervals on processor $k$ in the optimal solution and $l_i$ be the total length of active intervals on processor $i$.
We consider $O = \{O_k\}_{k=1}^m$ such that it is lexicographically maximal with respect to the $m$-tuple $(l_1,l_2,\ldots,l_m)$, and it differs from $\{S_k\}_{k=1}^{m}$ in least number of processors among those lexicograhically maximal ones. If $O_k=S_k$ for $1 \leq k \leq m$, then the lemma follows. For the sake of contradiction, let us assume that $O_k \neq S_k$, for some $k$.

\begin{claim}
$O_r$ is a feasible skeleton for the single processor instance $\mathcal{I}_r, 1 \leq r \leq m$.
\end{claim}

\begin{proof}
Suppose the statement of the claim doesn't hold. Then there exists an interval $[t_i,t_j]$ such that $m-l(t_i,t_j)\geq r$ but there is no active interval overlapping $[t_i,t_j]$ in $O_r$. Since the optimal solution $O$ is laminar, there is no active interval overlapping $[t_i,t_j]$ for any $O_l, r \leq l \leq m$. This implies that the number of active interval at any time in $[t_i,t_j]$ in $O$ is at most $r-1 < m-l(t_i,t_j)$. This contradicts the feasibility of the optimal solution and the claim follows.
\end{proof}

\begin{claim}
The solution obtained by replacing the intervals on processor $k$ in the optimal solution, i.\,e.\, $O_k$ by $S_k$ is a feasible multi-processor skeleton.
\end{claim}

\begin{proof}
Suppose the statement of the claim doesn't hold. Then there exists an interval $[t_i,t_j]$ such that some active interval in $O_k$ overlaps with $[t_i,t_j]$ but no active interval in $S_k$ overlaps with $[t_i,t_j]$. Since the optimal solution $O$ is laminar, this implies that some active interval in $O_l$ overlaps with $[t_i, t_j]$ for any $1 \leq l \leq k$. Hence, $m-l(t_i,t_j) \geq k$, which implies that some active interval in $S_k$ must overlap with $[t_i,t_j]$. This contradicts our assumption and the claim follows.
\end{proof}

Since $O_k$ is a feasible solution to $\mathcal{I}_k$ and $S_k$ is an optimal solution for $\mathcal{I}_k$, cost of $S_k$ is at most the cost of $O_k$. Hence, replacing $O_k$ by $S_k$ gives a feasible skeleton without increasing the cost. The new solution as constructed above is laminar as well, otherwise we could move active time slots from a higher numbered processor to a lower numbered processor, contradicting our assumption that the optimal solution is the largest in lexicographical ordering $(l_1,l_2,\ldots,l_m)$. Thus we have obtained a different optimal solution which matches $\{S_i\}_{i=1}^{m}$ on more processors. This contradicts our choice of the optimal skeleton and the lemma follows.
\end{proof}

\section {Converting a Minimum Cost Skeleton into a Feasible Solution}
\label{sec:multi-converting}

As argued in Lemma~\ref{cl:opt-skeleton-m}, the cost of the obtained optimal skeleton is at most the cost of an optimal solution. However, the optimal skeleton may not be a feasible solution. In this section we show how to overcome this by transforming the optimal skeleton $\{S_k\}_{k=1}^m$ into a feasible solution while increasing its energy cost by at most a factor $6$.
This transformation consists of two phases: the \emph{extension phase} and the \emph{tripling phase}. In the following subsections we describe each one of them in more detail.

\subsection{Extension Phase}

The extension phase of the transformation is inspired by a similar transformation performed in~\cite{AGKK20}. For the sake of completeness we give a brief and high-level description of the required terminology and results and refer the interested reader to~\cite{AGKK20} for the details. 

We begin by introducing the notions of \emph{forced volume}  and of \emph{deficiency}:

\begin{definition}[\cite{AGKK20}]
The \emph{forced volume} of a job $j_i$ with respect to an interval $[a,b]$, is defined as $\texttt{fv}(j_i,[a,b]): = \max\{0,p_i - (|[r_i,d_i]\setminus [a,b]|)$\}.  Let $\mathcal{D}$ be a set of disjoint intervals. The \emph{forced volume} of job $j_i$ with respect to $\mathcal{D}$ is defined as $\texttt{fv}(j_i,\mathcal{D}):= \max\{0,p_i- (|[r_i,d_i]| -\sum_{D\in \mathcal{D}}|D\cap[r_i,d_i]|)\}$.
\end{definition}

Intuitively, $\texttt{fv}(j_i,[a,b])$ is the minimum volume of $j_i$ that must be processed during $[a,b]$ in any feasible schedule, and $
\texttt{fv}(j_i,\mathcal{D})$ is the amount of volume that must be processed within the intervals of $\mathcal{D}$ in any feasible schedule.

\begin{definition}[\cite{AGKK20}]
    Let $\mathcal{D}$ be a set of disjoint intervals, and $\mathcal{I}=\{I_1,I_2,\dots I_k\}$ be a set of not necessarily disjoint intervals with the property, that for any time-point $t$, $m_t := |\{i\in\mathcal{I}: I_i\cap t\neq \emptyset\}|\le m$ holds. Furthermore let $\mathcal{J}$ be a set of jobs.  The  \emph{deficiency} of $\mathcal{D}$ \emph{with respect to} $\mathcal{I}$ and $\mathcal{J}$, denoted by \texttt{def}$(\mathcal{D},\mathcal{I},\mathcal{J})$, is the non-negative difference between the sum of the forced volume of all jobs of $\mathcal{J}$ with respect to $\mathcal{D}$ and the total volume that can be processed in $\mathcal{D}$ within $\mathcal{I}$. Thus
        \begin{align*}
            \texttt{def}(\mathcal{J},D,\mathcal{I}) = \max \left(0,\sum_{j\in\mathcal{J}}\texttt{fv}(j,\mathcal{D}) - \sum_{t:[t,t+1]\subseteq \mathcal{D}} m_t \right).
        \end{align*}
\end{definition}

In~\cite{AGKK20}, a decision problem called \texttt{deadline-scheduling-on-intervals} was introduced. More formally, problem \texttt{deadline-scheduling-on-intervals} takes as input $k$ (not necessarily disjoint) supply-intervals $\mathcal{I} = \{I_1,I_2,\dots I_k\}$ and the set $\mathcal{J}$ of  jobs (each with a release-time, a deadline and processing volume), and asks whether the jobs of $\mathcal{J}$ can be feasibly scheduled on $\mathcal{I}$. In~\cite{AGKK20} a polynomial-time algorithm \emph{DSI-ALG} was presented that decides \texttt{deadline-scheduling-on-intervals}. Furthermore, in case the input instance is infeasible, DSI-ALG returns a \emph{minimal} set of intervals $\mathcal{D} = \{D_1,D_2,\dots D_\ell\}$ of maximum deficiency with respect to $\mathcal{I}$. The following theorem follows from Section~4 in~\cite{AGKK20} (more specifically the first statement appears in~\cite{AGKK20} as Theorem~$4.1$, the last one as Claim~$4.2$, and the polynomial-time algorithm is described and analyzed throughout Section~$4$):

\begin{theorem}[\cite{AGKK20}]
    \label{thm:feasibility}
    An instance of \texttt{deadline-scheduling-on-intervals} is feasible iff no set of disjoint intervals has positive deficiency. Additionally, there is a polynomial-time algorithm that decides if a given instance to \texttt{deadline-}\texttt{scheduling-}\texttt{on-}\texttt{intervals} is feasible and if it is not, then a minimal set of disjoint intervals of maximum deficiency with respect to the instance is returned. Furthermore, increasing the volume of supply intervals at any time point in the minimal set of maximum deficiency by one unit decreases the maximum deficiency by one unit.
\end{theorem}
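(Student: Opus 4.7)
The plan is to prove all three parts via a max-flow-min-cut formulation of \texttt{deadline-scheduling-on-intervals}. I would construct a network with source $s$, sink $t$, a node for each job $j_i\in\mathcal{J}$ and for each unit timeslot $[\tau,\tau+1]$ that lies in $\bigcup_i I_i$, arcs $(s,j_i)$ of capacity $p_i$, arcs $(j_i,[\tau,\tau+1])$ of capacity $1$ for every $\tau\in[r_i,d_i)$, and arcs $([\tau,\tau+1],t)$ of capacity $m_\tau$. Standard preemptive-scheduling constructions (together with integrality of flows on this network) imply that the instance is feasible iff the maximum $s$-$t$ flow equals $P=\sum_i p_i$; this already delivers the polynomial-time feasibility-decision half of part two.

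For the iff-characterization, the direction ``positive-deficiency disjoint set $\Rightarrow$ infeasible'' is immediate from the definition of $\texttt{fv}$, since the jobs forced into $\mathcal{D}$ cannot be accommodated by the supply available inside $\mathcal{D}$. For the converse, I would start from an $s$-$t$ min cut $(A,\bar A)$ of capacity strictly less than $P$. Writing $J_A$ for the jobs on the source side and $T_{\bar A}$ for the timeslots on the sink side, I would argue (by a local exchange that switches any ``wasteful'' job across the cut) that without loss of generality every $j_i\in J_A$ has \emph{every} feasible slot inside $T_{\bar A}$. Partition $T_{\bar A}$ into its maximal runs of consecutive unit slots to obtain a disjoint family $\mathcal{D}$; the cut capacity then rewrites as $\sum_{j_i\notin J_A}p_i+\sum_{[\tau,\tau+1]\subseteq\mathcal{D}}m_\tau$, and rearranging with the identity $p_i=\texttt{fv}(j_i,\mathcal{D})+(p_i-\texttt{fv}(j_i,\mathcal{D}))$ yields $\sum_i \texttt{fv}(j_i,\mathcal{D})>\sum_{[\tau,\tau+1]\subseteq\mathcal{D}}m_\tau$, i.e., positive deficiency.

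To extract a \emph{minimal} maximum-deficiency $\mathcal{D}$ in polynomial time, I would choose the canonical min-cut closest to the source in the residual graph, read off the initial disjoint-interval witness, and then iteratively shrink each interval unit-by-unit from its two endpoints as long as the maximum deficiency is preserved; each step strictly reduces total length, so the procedure terminates after polynomially many iterations and returns a minimal witness. The final statement of the theorem, that raising $m_\tau$ by one at any $\tau$ inside this minimal $\mathcal{D}$ decreases the maximum deficiency by exactly one, then falls out of the min-cut picture: such a $\tau$ is counted exactly once in the cut capacity, so increasing its capacity raises the min-cut value (equivalently the max-flow value) by exactly one, which in turn reduces the supply-deficit against forced volume by exactly one.

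I expect the main obstacle to be the rigorous passage from an arbitrary min-cut to a disjoint-interval witness of the same capacity: a min-cut can a priori slice the timeslot nodes in a non-contiguous way, so the exchange argument needs to combine flow duality with the interval structure of each $[r_i,d_i]$ to ``round'' the cut to a genuine disjoint-interval family without increasing capacity. This is also the step that underlies correctness of the endpoint-pruning routine that produces the minimal witness, and it is where forced-volume arithmetic---rather than generic LP duality---really does the work.
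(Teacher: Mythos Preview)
The paper does not supply its own proof of this theorem: as the sentence immediately preceding the statement makes explicit, the result is imported from \cite{AGKK20} (the three assertions are attributed to Theorem~4.1, Claim~4.2, and Section~4 of that reference, respectively). So there is no in-paper argument to compare your proposal against; your max-flow/min-cut plan is in fact the standard route and is essentially what \cite{AGKK20} carries out.

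That said, two points in your sketch would not survive as written. First, the cut bookkeeping is inverted: in the network you describe, the arc $([\tau,\tau+1],t)$ is cut exactly when the timeslot lies on the \emph{source} side, so the witness $\mathcal{D}$ contributing $\sum m_\tau$ must be assembled from $T_A$, not $T_{\bar A}$; correspondingly your normalization ``every $j_i\in J_A$ has every feasible slot inside $T_{\bar A}$'' is the wrong one (it would force each such job to be individually infeasible). After fixing the orientation you also need to pad $\mathcal{D}$ with all zero-supply slots so that the cut inequality matches the forced-volume definition verbatim. Second, your one-line argument for the final clause only shows the maximum deficiency drops by \emph{at most} one when you raise $m_\tau$; to get \emph{exactly} one you need that every maximum-deficiency set contains $\tau$. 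This holds because $\mathcal{D}\mapsto\sum_j\texttt{fv}(j,\mathcal{D})-\sum_{\tau\in\mathcal{D}}m_\tau$ is supermodular, so its maximizers form a lattice and the minimal one is contained in all of them, but it does not follow from looking at a single min cut.
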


Finally, \cite{AGKK20} presents a polynomial time algorithm $\emph{EXT-ALG}$ which extends a supply interval $I'\in \mathcal{I}$ (a property that we will use later, is that $I'$ is chosen so that it overlaps some $D'\in\mathcal{D}$ without containing $D'$, i.e, $D'\not\subseteq I'$ but $D'\cap I'\neq \emptyset$) by one time slot so as to decrease the total deficiency of the set of intervals of maximum deficiency $\mathcal{D}$ by one. This is repeated until the resulting set of supply intervals becomes feasible. Since the maximal deficiency at the beginning was at most the total processing time $P$ of all jobs, the total increase in energy consumption by extending the supply intervals could also only have been at most $P$.

The extension phase consists of repeatedly extending the intervals of $\{S_k\}_{k=1}^m$ via algorithm EXT-ALG until this is not possible anymore. Assume that at this point $\{S_k\}_{k=1}^m$ has been extended to an interval set $\mathcal{I}$. The extension phase thus terminates either because $\mathcal{I}$ is a feasible instance for $\mathcal{J}$ (and by Theorem~\ref{thm:feasibility} no set of disjoint intervals has positive deficiency with respect to $\mathcal{I}$ and $\mathcal{J}$) or because the minimal set $\mathcal{D}$ of maximum deficiency returned by DSI-ALG does not contain any interval $D'$ such that $I'\cup D'\neq\emptyset$ and $I'\not\supseteq D'$ holds for some $I'\in\mathcal{I}$. In the later case $\mathcal{I}$ is still not feasible, and a further transformation (described in the next subsection) is required. The extension phase as stated now is pseudo-polynomial but can be carried out in polynomial time by using standard techniques (see \cite{AGKK20} for more details). From the argument from~\cite{AGKK20} as well as the discussion above, the following lemma follows:

\begin{lemma}
     \label{lem:extension-phase} \label{lemma:multi_proc_phase_1}
     The energy-cost of the schedule $\mathcal{I}$ differs from that of $\{S_k\}_{k=1}^m$ by at most an additive factor of $P$. Furthermore $\mathcal{I}$ is either feasible, or contains no interval $I'\in\mathcal{I}$ that overlaps but does not contain an interval from the minimal set of intervals of maximum deficiency $\mathcal{D}$.
\end{lemma}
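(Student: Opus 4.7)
The plan is to handle the two assertions in sequence, invoking Theorem~\ref{thm:feasibility} for the behavior of each invocation of EXT-ALG and then iterating.

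First I would upper-bound the initial maximum deficiency of $\{S_k\}_{k=1}^m$ with respect to $\mathcal{J}$. For any disjoint set $\mathcal{D}$ of intervals, the definition of forced volume gives $\sum_{j\in\mathcal{J}}\texttt{fv}(j,\mathcal{D}) \le \sum_{i=1}^n p_i = P$, so the maximum deficiency over all such $\mathcal{D}$ is at most $P$ at the start of the extension phase. By Theorem~\ref{thm:feasibility}, each invocation of EXT-ALG extends one supply interval by a single time slot and decreases the maximum deficiency by exactly one. Hence EXT-ALG is invoked at most $P$ times before either the deficiency drops to zero or no admissible interval remains to be extended.

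Next I would track the change in energy per invocation. Each step adds one active time slot to an existing supply interval; since an interval is only extended (never created from scratch), no new wake-up charge of $q$ is introduced. If the extension happens to cause two previously separate active intervals to merge, the total energy in fact decreases by $q-1$. Either way the energy grows by at most $1$ per iteration, so summing over the at most $P$ iterations gives the claimed additive overhead of at most $P$ above the cost of $\{S_k\}_{k=1}^m$.

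Finally, the structural statement is essentially a reading of the stopping rule. The phase terminates precisely when EXT-ALG cannot be applied, which by its precondition happens exactly when either the maximum deficiency is zero (so $\mathcal{I}$ is feasible by Theorem~\ref{thm:feasibility}) or the minimal set $\mathcal{D}$ of maximum deficiency contains no $D'$ overlapped by, yet not contained in, some $I'\in\mathcal{I}$. The main obstacle I expect is the per-iteration energy bookkeeping, in particular confirming that a single extension never introduces a fresh wake-up cost of $q$ but may actually remove one; beyond that, turning the naive pseudo-polynomial procedure into a polynomial-time one relies on the standard techniques already referenced from~\cite{AGKK20}.
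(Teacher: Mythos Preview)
Your proposal is correct and mirrors the paper's own argument, which is not given as a separate proof but is the discussion immediately preceding the lemma statement: bound the initial maximum deficiency by $P$ via $\sum_j \texttt{fv}(j,\mathcal{D})\le P$, note that each EXT-ALG step extends an existing interval by one slot (so no new wake-up charge) while dropping the deficiency by one (Theorem~\ref{thm:feasibility}), and read off the termination dichotomy from the stopping rule of EXT-ALG. Your explicit remark about merging and the per-step energy bookkeeping is a slight elaboration over the paper's one-line justification, but the approach is the same.
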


\subsection{Tripling Phase}

In case the extension phase terminated with an infeasible solution, then, by Lemma~\ref{lem:extension-phase}, there is no interval $I'\in\mathcal{I}$ such that $I'$ overlaps some interval $D'\in\mathcal{D}$ without containing it. In that case, we need to perform the tripling phase, in which we carefully power on further machines at specific times so as to make the instance feasible. Let $m_t$ be the number of machines active at time $t$ in $\mathcal{I}$. We create a new solution $\mathcal{I'}$ by setting $m_{t}'=\min \{3m_t,m\}$. In Lemma \ref{lemma: multi_proc_feasibility}, we show that $\mathcal{I'}$ is a feasible solution to the original instance. By construction, the total cost of intervals in $\mathcal{I'}$ is at most thrice that of $\mathcal{I}$. From the above discussion and Lemma~\ref{lemma:multi_proc_phase_1}, it follows that the algorithm consisting of the tripling and the extension phase is a $6$-approximation algorithm. In other words Theorem~\ref{thm:6apx} follows.

\begin{lemma} \label{lemma: multi_proc_feasibility}
$\mathcal{I'}$ is a feasible solution to the original instance.
\end{lemma}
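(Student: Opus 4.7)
My plan is to invoke Theorem~\ref{thm:feasibility}, which reduces feasibility of $\mathcal{I}'$ to the condition that no set of disjoint intervals has positive deficiency with respect to $\mathcal{I}'$. Rather than analyse an arbitrary candidate deficient set directly, I would focus on $\mathcal{D}$, the minimal maximum-deficiency set produced at the end of the extension phase, and leverage the final part of Theorem~\ref{thm:feasibility}: each unit of supply added within the minimal max-deficiency set decreases the maximum deficiency by one. The goal then becomes to show that the extra supply tripling injects inside $\bigcup_{D\in\mathcal{D}}D$ is at least the original deficiency of $\mathcal{D}$ in $\mathcal{I}$, so that the maximum deficiency drops to zero in $\mathcal{I}'$.

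To quantify the supply, I first observe that in the stuck state (Lemma~\ref{lem:extension-phase}) every $I\in\mathcal{I}$ either contains $D\in\mathcal{D}$ entirely or is disjoint from it, so $m_t$ is constant on each $D$, equal to the number $k_D$ of intervals of $\mathcal{I}$ containing $D$. The skeleton property (which is preserved through extensions since they only add supply) then yields a witness time $t^*\in[a_D,b_D)$ with $m_{t^*}\ge m-l(a_D,b_D)$; combined with the preceding constancy this gives $k_D\ge m-l(a_D,b_D)$, and hence $m'_t=\min(3k_D,m)\ge k_D\ge m-l(a_D,b_D)$ for every $t\in D$ in $\mathcal{I}'$. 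To bound the forced volume from above, I plan to exploit the disjointness of the intervals of $\mathcal{D}$ together with the feasibility of the original instance to produce a feasible schedule that simultaneously blacks out $l(a_D,b_D)$ machines in every $D\in\mathcal{D}$, yielding $\sum_j\texttt{fv}(j,\mathcal{D})\le\sum_D(m-l(a_D,b_D))|D|$. Combining the two inequalities gives that the deficiency of $\mathcal{D}$ with respect to $\mathcal{I}'$ is nonpositive, and hence by the final part of Theorem~\ref{thm:feasibility} the overall maximum deficiency is zero.

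The main obstacle is rigorously justifying the per-interval forced-volume bound above. The weakest bound $\sum_j\texttt{fv}(j,\mathcal{D})\le(m-l([a,b]))|\mathcal{D}|$, obtained from a single black-out on the spanning interval $[a,b]$ of $\mathcal{D}$, can be too loose whenever $l(a_D,b_D)>l([a,b])$ for some $D$, since then $\min(3k_D,m)$ may not dominate $m-l([a,b])$. I expect to overcome this either by a fractional-flow / LP-duality argument that combines the individual feasible schedules witnessing the $l(a_D,b_D)$ values into one simultaneous schedule using the disjointness of $\mathcal{D}$, or, failing that, by a case analysis depending on whether $3(m-l(a_D,b_D))\ge m$ for each $D$ that falls back to the weaker spanning bound where necessary. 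Endpoints of $D$ that do not lie in $T$ can be handled by a standard snap-to-$T$ argument without affecting the deficiency of $\mathcal{D}$.
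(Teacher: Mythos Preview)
Your central inequality $\sum_j \texttt{fv}(j,\mathcal{D})\le\sum_{D\in\mathcal{D}}(m-l(a_D,b_D))\,|D|$ is false in general, and no fractional-flow or LP-duality argument can rescue it because the simultaneous-blackout schedule you need may simply not exist. Take $m=10$, ten identical jobs with $r=0$, $d=30$, $p=11$, and $\mathcal{D}=\{[0,10],[20,30]\}$. For each $D\in\mathcal{D}$ one can black out all $10$ machines and still schedule everything in the remaining $20$ slots, so $l(D)=10$ and the right-hand side is $0$; yet every job has forced volume $11-10=1$ with respect to $\mathcal{D}$, so the left-hand side is $10$. The obstruction is precisely the interaction between disjoint intervals through jobs whose span crosses both: the individual blackout budgets do not add. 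Your fallback ``spanning bound'' $(m-l([a,b]))|\mathcal{D}|$ does hold here, but to use it you would then need $m'_t\ge m-l([a,b])$ on every $D$, and nothing in your argument forces $3k_D$ (which is only guaranteed to be $\ge 3(m-l(a_D,b_D))$) to dominate $m-l([a,b])$ when $l(a_D,b_D)>l([a,b])$.

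There is a second gap: showing $\texttt{def}(\mathcal{D},\mathcal{I}')\le 0$ for the \emph{one} set $\mathcal{D}$ that was maximally deficient for $\mathcal{I}$ does not by itself imply that the maximum deficiency for $\mathcal{I}'$ is zero. The last clause of Theorem~\ref{thm:feasibility} lets you decrease the maximum deficiency by one only when the added unit of supply lands in the \emph{current} minimal max-deficiency set, which shrinks as you add supply; tripling adds supply wholesale and can easily overshoot on some $D$'s while leaving a strict subset $\mathcal{D}''\subset\mathcal{D}$ still positively deficient.

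The paper avoids both problems by arguing differently. It works with the minimal max-deficiency set $\mathcal{D}'$ for $\mathcal{I}'$ (not $\mathcal{I}$), and instead of trying to bound $\sum_j\texttt{fv}(j,\cdot)$ via blackouts, it splits the forced volume on each $D_i$ into two parts. For jobs whose span is contained in $D_i$ the skeleton inequality $k_i\ge m-l(D_i)$ gives feasibility of those jobs on the $k_i$ available machines, so their forced volume on $\mathcal{D}'_i$ is at most the supply $B=k_i|\mathcal{D}'_i|$. For jobs crossing an endpoint of $D_i$, the paper proves a \emph{counting} bound (Claim~\ref{claim:number_jobs_overlap}): at most $2k_i$ such jobs can have positive forced volume, because otherwise enlarging $D_i$ by one slot would raise deficiency, contradicting maximality of $\mathcal{D}$. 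Hence the crossing contribution is at most $2k_i|\mathcal{D}'_i|=2B$, the total is at most $3B$, exactly the supply $3k_i|\mathcal{D}'_i|$ under $\mathcal{I}'$, and removing $\mathcal{D}'_i$ cannot decrease the deficiency---contradicting the minimality of $\mathcal{D}'$. The crossing-job count is the missing idea in your plan; it is what replaces the unavailable simultaneous-blackout bound.
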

\begin{proof}
Suppose $\mathcal{I}'$ (and hence $\mathcal{I}$) is infeasible. Let $\mathcal{D},\mathcal{D}'$ be the set of minimal intervals of maximum deficiency guaranteed by Theorem \ref{thm:feasibility}. Note that as we extend the intervals using $EXT-ALG$ in the extension phase, minimal set of maximum deficiency can only shrink. Hence, every interval in $\mathcal{D'}$ is a subset of some interval in $\mathcal{D}$. Let $\mathcal{D}=\{D_1,D_2,\ldots,D_k\}$ and $\mathcal{D}'=\cup_{i=1}^{k}\mathcal{D}'_i$ where $\mathcal{D}'_i$ is the set of intervals of $\mathcal{D}'$ contained in $D_i$. Let $k_i$ be the number of active intervals in $\mathcal{I}$ overlapping with $D_i$ and $l_i$ be the number of jobs $m$ such that $[r_m,d_m] \subsetneq D_i$ and $fv(j_m,\mathcal{D})>0$. The following claim gives a bound on $l_i$ in terms of $k_i$.

\begin{claim} \label{claim:number_jobs_overlap}
$l_i \leq 2 k_i$.
\end{claim}

\begin{proof}
Let $D_i=[a,b]$ and $l_{i}^{a}$ be the number of jobs $j_m$ such that $[r_m,d_m] \subsetneq D_i$, $fv(j_m,\mathcal{D})>0$ and $a \in [r_m,d_m]$. Let $l_{i}^{b}$ be defined analogously. Then consider the set $\mathcal{D} \cup [a-1,a]$. By definition of forced volume, $fv(J,\mathcal{D}\cup [a-1,a]) \geq fv(J, \mathcal{D}) + l_{i}^{a}$. Also, $\sum_{t:[t,t+1] \subseteq \mathcal{D}U[a-1,a]}m_t=\sum_{t:[t,t+1] \subseteq \mathcal{D}}m_t + k_i$. Hence, $def(J,\mathcal{D}\cup [a-1,a],\mathcal{I}) \geq def(J,\mathcal{D},\mathcal{I})+(l_{i}^{a}-k_i)$. Since, $\mathcal{D}$ is a set of intervals with maximum deficiency for $\mathcal{I}$, it must be true that $l_{i}^{a} \leq k_i$. A similar argument shows that $l_{i}^{b} \leq k_i$. Hence, $l_i \leq l_{i}^{a} +l_{i}^{b}\leq 2k_i$.  
\end{proof}

If $m'_t=m$ for all $[t,t+1] \subseteq D' \in \mathcal{D'}$, then the original instance must be infeasible (by Theorem $\ref{thm:feasibility}$). Hence $m'_t=3m_t< m$ for some $[t,t+1] \subseteq D' \in \mathcal{D}'_i$. By Lemma \ref{lemma:multi_proc_phase_1}, we have that $m'_t=3m_t=3k_i$ for all $[t,t+1] \subseteq D' \in \mathcal{D}'_i$. We finish the proof of the lemma by showing that deficiency of $\mathcal{D}\setminus \mathcal{D}'_i$ is at least the deficiency of $\mathcal{D'}$, thus contradicting the fact that $\mathcal{D'}$ is the minimal set of maximum deficiency returned by Theorem \ref{thm:feasibility}.

\begin{claim}  \label{clm:tripling_final}
$def(J,\mathcal{D'}\setminus \mathcal{D}'_i,\mathcal{I}') \geq def(J,\mathcal{D'},\mathcal{I}')$.
\end{claim}
\begin{proof}
For notational convenience, let $B=\sum_{t:[t,t+1] \subseteq \mathcal{D}'_i}m_t$. By the discussion above, we have $m'_t=3 k_i$ for all $[t,t+1] \subseteq D' \in \mathcal{D}'_i$. Let $J_i$ be the set of jobs such that for each $j_t \in J_i$, $[r_t,d_t] \subseteq D_i$. Since $\mathcal{I}$ is a multiprocessor skeleton, it must be true that $k_i \geq m-l(D_i)$. Hence it follows that all jobs in $J_i$ can be feasibly scheduled in $\mathcal{I}$. By Theorem \ref{thm:feasibility}, we have $fv(J_i,\mathcal{D}'_i) \leq B$.

Let $J_{\mathcal{I}},J_{\mathcal{I'}}$ be the set of jobs which are not completely contained in $D_i$ and have a strictly positive forced volume with respect to $\mathcal{I},\mathcal{I'}$ respectively. Since, every interval in $\mathcal{D}'$ is a subset of some interval in $\mathcal{D}$, forced volume of any job with respect to $\mathcal{I'}$ can not be more than its forced volume with respect to $\mathcal{I}$. Hence, $J_{\mathcal{I}'} \subseteq J_{\mathcal{I}}$. Claim \ref{claim:number_jobs_overlap} gives $|J_{\mathcal{I'}}| \leq |J_{\mathcal{I}'}|\leq 2k_i$. Hence jobs in $j_{\mathcal{I}'}$ contribute a maximum forced volume of $\sum_{t:[t,t+1] \subseteq \mathcal{D}'_{i}} 2k_i=\sum_{t:[t,t+1] \subseteq \mathcal{D}'_{i}} 2 m_t=2B$. 

Observe that removing $\mathcal{D}'_{i}$ doesn't introduce any new jobs with a strictly positive forced volume. Hence, by the discussion above, the total reduction in forced volume due to removal of $\mathcal{D'}_{i}$ is at most $3B$. Removing $\mathcal{D'}_{i}$ from $\mathcal{D}$ reduces the total volume available for processing by $3B$. Hence removing $\mathcal{D'}_{i}$ from $\mathcal{D}$ doesn't result in decrease of the forced volume and the statement of the claim follows.
\end{proof}

\end{proof}

\section{A 2-Approximation Algorithm for Multiple Processors}
\label{sect:last_one}

In this section we prove Theorem~\ref{thm:lp_2} thus improving upon the recent $3$-approximation algorithm of ~\cite{AGKK20}. To achieve this, we introduce additional constraints to the linear programming (LP) relaxation of \cite{AGKK20} and devise a new rounding scheme to harness the power of new constraints. In the remainder of the section, we prove that our rounding technique gives a pseudo-polynomial time $2$-approximation algorithm -- thus also showing an upper bound of $2$ on the integrality-gap of the LP. By using standard arguments, this directly implies $(2+\epsilon)$-approximation algorithm in polynomial time (see \cite{AGKK20} for more details). 

We now give a brief description of the LP relaxation of ~\cite{AGKK20}. For every possible interval $I\subseteq [0,D]$, there is an associated  variable $x_I,\  0\le x_I\le m$ which indicates the number of times $I$ is picked in the solution. The objective is to minimize the total energy consumption, ie. $\sum_I x_I(\abs{I}+q)$. $m_t$ denotes the number of processors that are active during time slot $t$ (or equivalently total capacity of active intervals in time slot $t$) and $f(i,t)$ denotes the volume of job $i$ that is processed in time slot $t$. The constraints of the LP are self explanatory; the interested reader is referred to \cite{AGKK20} for the details.

We now describe the additional constraints (in bold). Recall that for any interval $[a,b]$, $l(a,b) \in \mathbb{Z}_{\ge 0}$ is the maximum number of machines that can remain inactive throughout interval $[a,b]$ without affecting the feasibility of the instance (see Section \ref{sec:multi-skeletons}). This implies that at least $m-l(a,b)$ active intervals are overlapping with $[a,b]$ in any feasible schedule. We add a constraint capturing this fact for every $[a,b] \subseteq [0,D]$. 

\begin{lp}{minimize}{\sum_I x_I(\abs{I}+q)}
\cnstr{m_t}{=}{\sum_{I:[t,t+1]\in I} x_I}{0 \le t < D}
\cnstr{m_t}{\ge}{\sum_{i: r_i\le t\le d_i-1} f(i,t)}{0 \le t < D}
\cnstr{p_i}{=}{\sum_{t=r_i}^{d_i-1} f(i,t)}{0 \le i \le n}

\cnstr{\bm{\displaystyle\sum_{I:[a,b]\cap I\neq\varnothing} x_I}}{\bm{\ge}}{\bm{m-l(a,b)}}{\bm{0 \le a < b\le D}}

\cnstr{f(i,t)}{\in}{[0,1]}{ \forall i,t}
\cnstr{x_I,m_t}{\in}{[0,m]}{\forall t, I\subseteq[0,D]}
\end{lp}
 
Suppose the optimum fractional solution to the linear program has value $f$. For reasons that will become apparent later, we would like to use an optimum solution maximizing the value $\sum_{t} \min(m_t,1)$. In order to compute such a solution we solve a second linear program that is based on the previous one, as follows: we introduce a new set of variables $y_t$ and additional constraints $y_t \leq m_t$ and $0 \leq y_t \leq 1$ for each time slot $t$. By adding constraint $\sum_I x_I(\abs{I}+q)=f$ we enforce that the resulting solution has energy cost equal to $f$ and is therefore also optimal. Finally we set the objective function to maximize $\sum_{t}y_t$. 

Let $F=\{I:x_I>0\}$ be the optimal fractional solution after solving the second LP. Let $\epsilon = \gcd_{i \in F}(x_I)$. We create $x_I / \epsilon$ copies of each $I \in F$ to assume that all intervals in $F$ have the same $x_I$ value (note that $F$ is a multiset). If there exist $[a,b],[c,d], a<c<b<d$ with $x_{[a,b]},x_{[c,d]}=\epsilon$, we replace them by $[a,d],[b,c]$ with $x_{[a,d]},x_{[b,c]}=\epsilon$. It is easily verified that this process doesn't affect the feasibility of the solution. This process is repeated until no such pair of intervals remain in the instance. We therefore assume from now on, that the intervals in $F$ are non-crossing. We would like to stress that the above is done only for the ease of analysis and during the course of the proof, it will be clear that we don't actually need to do this.

We partition the time slots in $[0,D]$ into \emph{blocks} and \emph{non-blocks} as follows. A duration $[a,b]$ is called a block iff $m_t \in [0,1)$ for all $a \le t \le b-1$ and $m_{a-1},m_b \ge 1$. A duration $[a,b]$ is called a non-block iff $m_t \in [1,m]$ for all $a \le t \le b-1$ and $m_{a-1},m_b < 1$. The following lemma leverages the new constraints and lower bounds the total weight of intervals of $F$ contained in a non-block. Proof of this lemma crucially uses the fact that $F$ is an optimum solution maximizing the value $\sum_{t} \min(m_t,1)$. 
\begin{lemma} \label{lemma:intervals_non_block}
Let $N=[a,b]$ be a non-block. If there exists a $t \in [a,b-1]$ such that $m_t>1$, then $\sum_{I:I \cap [a,b] \neq \emptyset}x_I \geq 2$. 
\end{lemma}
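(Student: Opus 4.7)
My plan is to argue by contradiction. Suppose $X := \sum_{I : I \cap [a,b] \neq \emptyset} x_I < 2$; then every $t \in [a,b-1]$ satisfies $m_t \le X < 2$, since any interval containing such a $t$ must intersect $[a,b]$. Using the non-crossing structure of $F$, the intervals of positive weight that contain $t^*$ form a nested chain $I_1 \subsetneq I_2 \subsetneq \cdots \subsetneq I_k$ with $\sum_j x_{I_j} = m_{t^*} > 1$. I would first establish that the innermost interval $I_1$ is contained in $[a, b]$: otherwise, if $I_1$ extended left past $a-1$, every $I_j \supseteq I_1$ in the chain would also cover $a-1$, forcing $m_{a-1} \ge m_{t^*} > 1$ and contradicting the non-block property $m_{a-1} < 1$ (the right side is symmetric). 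Moreover, since every $I_j$ in the chain also covers the slot $d_{I_1}-1$, we get $m_{d_{I_1}-1} \ge m_{t^*} > 1$.

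Next I would construct a small perturbation of $F$. The jump $m_a - m_{a-1} \ge 1 - m_{a-1} > 0$ guarantees positive total $x$-weight on intervals with left endpoint exactly $a$, and combining this with the structural claim above, I would identify an interval $J = [a, d_J] \subseteq [a,b]$ of positive weight with $m_{d_J - 1} > 1$ (for example $J = I_1$ whenever $c_{I_1} = a$, otherwise an iterative argument on chain intervals starting at $a$; a symmetric argument applies to the right boundary $b$). For a sufficiently small $\delta > 0$, set $x_J \leftarrow x_J - \delta$ and create a new interval $J' = [a-1, d_J - 1]$ with $x_{J'} = \delta$. Since $|J| = |J'|$ the LP-objective is preserved, and the only $m_t$ values that change are $m_{a-1}$ (increases by $\delta$) and $m_{d_J - 1}$ (decreases by $\delta$). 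Choosing $\delta$ small enough that $m_{a-1} + \delta \le 1$ and $m_{d_J - 1} - \delta > 1$, we obtain $\sum_t \min(m_t, 1)$ strictly larger by $\delta$, contradicting the defining property of $F$.

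The main obstacle is verifying LP-feasibility of the perturbed solution, which splits into two parts. First, the processing constraint $m_t \ge \sum_i f(i,t)$ may fail at the loss slot $t = d_J - 1$; this I would repair by rerouting up to $\delta$ units of $f$ from $d_J - 1$ to $a-1$ via an augmenting-path in the fractional transportation network, exploiting the $\delta$ capacity freshly added at $a-1$ (jobs contributing to $f(\cdot, d_J - 1)$ can reach $a-1$ either directly, when their spans overlap both slots, or indirectly via a standard augmenting chain). Second, the new covering constraint $\sum_{I: I\cap[c,d] \neq \emptyset} x_I \ge m - l(c,d)$ is altered only for pairs $[c,d]$ intersecting $J$ but not $J'$ (which forces $c = d_J$) and vice versa (which forces $d = a - 1$); the gain side is automatic, and for the loss side a delicate case analysis is needed to show the required $\delta$ of slack in the constraint on $[d_J, d]$, using the bound $X < 2$ together with the structural consequences of the non-block property. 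Once these verifications are in place, the strict increase of $\sum_t \min(m_t, 1)$ in an LP-feasible and equally-cheap solution contradicts the choice of $F$, completing the proof.
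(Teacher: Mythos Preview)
Your approach is genuinely different from the paper's. The paper does \emph{not} try to preserve cost while increasing $\sum_t \min(m_t,1)$. Instead it \emph{shrinks} an interval $I\subseteq[a,b]$ by one slot, which strictly lowers cost; optimality of $F$ then forces the shrunk solution to be infeasible for the processing constraints. Invoking the deficiency machinery (Theorem~\ref{thm:feasibility}), the paper extracts a minimal maximum-deficiency set $\mathcal D$, uses the secondary objective $\sum_t\min(m_t,1)$ only to show that each $D\in\mathcal D$ lies entirely within a single block or non-block, and concludes that some sub-interval $[x,y]\subseteq[a,b]$ has $m-l(x,y)\ge 2$. The new covering constraint then gives $\sum_{I:I\cap[a,b]\neq\emptyset}x_I\ge m-l(x,y)\ge 2$ directly. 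In short, the covering constraint is the \emph{punchline}, not an obstacle to be checked.

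Your route has real gaps in exactly the places you flagged as ``delicate'':

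\textbf{(i) Existence of $J$.} You need an interval $J=[a,d_J]$ with left endpoint $a$ and $m_{d_J-1}>1$. The jump $m_a>m_{a-1}$ only gives \emph{some} interval starting at $a$; there is no reason its right-neighbouring slot has $m>1$, and the ``iterative argument on chain intervals'' is not supplied. If every interval starting at $a$ ends at a slot where $m=1$ exactly, your shift strictly decreases $\min(m_t,1)$ there and the contradiction evaporates.

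\textbf{(ii) Processing feasibility.} Your augmenting-path repair is precisely the question of whether the profile $m'_t$ (with $m'_{d_J-1}=m_{d_J-1}-\delta$ and $m'_{a-1}=m_{a-1}+\delta$) admits a feasible fractional schedule. This is a Hall-type condition on \emph{all} interval sets, not just a local flow push: if a deficiency-witnessing set $\mathcal D$ contains slot $d_J-1$ but not $a-1$, the shift makes $\mathcal D$ deficient and no augmenting chain exists. Establishing that this cannot happen is exactly the deficiency analysis the paper carries out---you cannot bypass it with a one-line appeal to ``standard augmenting''.

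\textbf{(iii) Covering-constraint slack.} Your claim that $X<2$ yields slack on the constraint for $[d_J,d]$ only works when $d\le b$: then indeed every interval hitting $[d_J,d]$ also hits $[a,b]$, forcing $m-l(d_J,d)\le 1$, and the bound $\sum_{I\cap[d_J,d]\neq\emptyset}x_I\ge m_{d_J-1}>1$ gives slack. But for $d>b$ the inclusion fails: intervals meeting $[d_J,d]$ to the right of $b$ need not meet $[a,b]$, so $X<2$ says nothing, the constraint can have $m-l(d_J,d)\ge 2$ and be tight in $F$, and your shift violates it. The ``structural consequences of the non-block property'' you invoke do not rule this out.

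The paper's argument sidesteps all three issues: shrinking (rather than shifting) trivially satisfies the covering constraints and needs no special $J$; infeasibility is the \emph{desired} outcome, and the deficiency certificate is what produces $m-l(x,y)\ge 2$.
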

\begin{proof}
By noting that a non-block is a maximal contiguous set of time intervals with $m_t\geq 1$, intervals in $F$ are laminar and $m_t >1$ for some $t\in [a,b]$, there must exist an $I \in F$ such that $I \subseteq [a,b]$. Let $F'$ be the solution obtained by replacing $I=[l,r]$ by $I'=[l+1,r]$ in $F$. Since $F$ is a fractional solution with minimum cost, $F'$ must be infeasible. Let $\mathcal{D} = \{D_1,D_2,\dots D_\ell\}$ be the \emph{minimal} set of disjoint intervals of maximum deficiency (with respect to $\mathcal{I}$) as returned by Theorem \ref{thm:feasibility} and $J'$ be the set of jobs such that $fv(j_i,\mathcal{D})>0$ for all $j_i \in J'$. Note that the deficiency of $\mathcal{D}$ with respect to the current solution is $\epsilon$. Let $m'_t$ be defined with respect to $F'$.

Suppose there exists a $t$ such that $[t,t+1] \subseteq D\in \mathcal{D}$, $[t-1,t]$ is part of a non-block and $[t,t+1]$ is part of a block (or vice versa). This implies that $m'_t < 1$ and $m'_{t-1} \ge 1$. Since $m'_{t-1} > m'_t$, there exists an $I' \in F'$ which ends at $t$. If we extend $I'$ to the right by 1 unit, deficiency would decrease by $\epsilon$ (by Theorem \ref{thm:feasibility}) and we will obtain a feasible solution different from $F$, with greater value of $\sum_{t}\min(m_t,1)$ (as extending $I'$ implies setting $m'_{t}=m_{t}+\epsilon > m_t$). Hence, no $D \in \mathcal{D}$ overlaps with a block and non-block simultaneously. Thus we can partition the intervals in $\mathcal{D}$ depending on whether they are contained in a block or a non-block. Let $\mathcal{D}_{NB} \subseteq \mathcal{D}$ and $\mathcal{D}_B \subseteq \mathcal{D}$ be the intervals of $\mathcal{D}$ contained in blocks and non-blocks respectively. 

Let $D_1,D_2 \in \mathcal{D}$ be such that $D_1,D_2$ do not belong to the same block or the same non-block. Suppose there exists a $j_i \in J'$ such that $[r_i,d_i] \cap D_1 \neq \emptyset$ and $[r_i,d_i] \cap D_2 \neq \emptyset$. Then there must exist a $t$ such that $[t,t+1] \subseteq [r_i,d_i]$ and $m_t <1$. Since $fv(j_i,\mathcal{D}\cup [t,t+1])=fv(j_i,\mathcal{D})+1$ and $m_t < 1$, we have that the deficiency of $\mathcal{D} \cup [t,t+1]$ is strictly more than the deficiency of $\mathcal{D}$. This contradicts the fact that $\mathcal{D}$ has maximum deficiency and hence no job in $J'$ overlaps with distinct $D_i,D_j$. Therefore jobs in $J'$ can be partitioned according to the block or non-block they overlap with. Let $J'_{N} \subseteq J'$ be the set of jobs with positive forced volume overlapping with the non-block $N$ and $\mathcal{D}_N \subseteq \mathcal{D}_{NB}$ be the set of intervals of $\mathcal{D}$ overlapping with $N$. 

Observe that $\mathcal{D}_N$ must contain the time slot $[l,l+1]$ and hence is non-empty (recall that $[l,r]$ was replaced by $[l+1,r]$ in $F$ to obtain $F'$). Also, $J'_{N} \neq \emptyset$, otherwise $\mathcal{D} \setminus \mathcal{D}_N$ would have been the minimal set with maximum deficiency. Hence, $def(J'_N,\mathcal{D}_N,F')>0$. Let $x$ be the left end point of leftmost interval in $\mathcal{D}_N$ and $y$ be the right end point of the rightmost interval in $\mathcal{D}_N$. Since $m'_t \geq 1$ for all $[t,t+1] \in [x,y] \subseteq N$ and $def(J'_N,\mathcal{D}_N,F)>0$, there must exist a time slot $t\in [x,y-1]$ such that $m_t \geq 2$ in any integer feasible solution. This implies that $m-l(x,y) \geq 2$ and hence $\sum_{I:I \cap [a,b] \neq \emptyset}x_I \geq m-l(a,b) \geq 2$.
\end{proof}

\paragraph*{Rounding Scheme.}
We next convert/round $F$ into an integral solution in a series of steps. We will denote the three intermediate solutions by $F_1,F_2,F_3$ and the corresponding $m_t's$ as $m^{1}_{t},m^{2}_{t},m^3_t$. Let $T_N,T_B$ be the set of time slots in non-blocks and blocks respectively. Let $P_F$ be the total available capacity in $F$ (ie. $P=\sum_{t}m_{t}$) and $P_{B},P_{N}$ be the total available capacity in the blocks and non-blocks respectively. Let $Q_F$ be the total wake up cost incurred by $F$, ie. $Q_F=q\sum_{I}x_{I}$. Then $Cost(F)=P_F+Q_F$ and $P_F=P_B+P_N$. For each non-block $N=[a,b]$ such that $m_t > 1$ for some $t \in [a,b-1]$, we add an additional supply interval $[a,b]$ with $x_{[a,b]}=1$ to $F$ and call this solution $F_1$.
\begin{claim}
$Cost(F_1) \leq Cost(F)+P_N+Q_F$.
\end{claim}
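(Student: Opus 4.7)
My plan begins by observing that the cost difference has a clean closed form. For each Type A non-block $N=[a,b]$ we add a single supply interval $[a,b]$ of length $|N|$ with $x_{[a,b]}=1$, contributing $|N|+q$ to the total cost. Since these added intervals are in bijection with $A$ and are disjoint from each other, I obtain
\[
Cost(F_1)-Cost(F) \;=\; \sum_{N\in A}(|N|+q) \;=\; \sum_{N\in A}|N| \;+\; q\,|A|.
\]
The plan is then to bound the first summand by $P_N$ and the second by $Q_F$, matching the two terms of the claimed budget $P_N+Q_F$.

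For the length bound, I would use only the definition of a non-block: if $N=[a,b]$ is a non-block then $m_t\ge 1$ for every $t\in[a,b-1]$, hence $\sum_{t\in N}m_t\ge |N|$. Summing over all non-blocks (which are pairwise disjoint by definition) gives $P_N=\sum_{t\in T_N}m_t\ge \sum_{N}|N|\ge \sum_{N\in A}|N|$. This step is purely definitional and requires no use of Lemma~\ref{lemma:intervals_non_block}.

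The second bound, $q|A|\le Q_F$ (equivalently $|A|\le \sum_I x_I$), is the delicate part and where I expect the main obstacle. Here I would leverage Lemma~\ref{lemma:intervals_non_block}, which gives $\sum_{I:\,I\cap N\neq\varnothing}x_I\ge 2$ for each $N\in A$, together with two structural facts inherited from the non-crossing (laminar) preprocessing of $F$: (i) any interval $I$ with $x_I\ge 1$ is fully contained in a single non-block, since $m_t<1$ at every block time slot forces any interval overlapping a block to have $x_I<1$; and (ii) for each $N\in A$ and any peak $t_0\in N$ with $m_{t_0}>1$, the nested chain of $F$-intervals through $t_0$ cannot consist entirely of intervals that extend beyond $N$, because all such intervals would contain a boundary time slot of $N$ forcing the sum to be at most $\max(m_{a-1},m_b)<1$, contradicting the fact that the chain sums to $m_{t_0}>1$. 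Hence each $N\in A$ has at least one $F$-interval contained in $N$. The plan is to combine these facts with Lemma~\ref{lemma:intervals_non_block} to exhibit a fractional matching from $A$ to $\{I:x_I>0\}$ assigning one unit of weight to each $N$ while respecting the capacities $x_I$; Hall's condition for this matching reduces to showing $\sum_{I\in\mathcal{I}(A')}x_I\ge |A'|$ for every $A'\subseteq A$, which I would verify using the ``heavy intervals stay inside a single non-block'' structure to avoid overcounting by intervals that span multiple non-blocks.

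Finally, combining the two bounds yields $Cost(F_1)-Cost(F)\le P_N+Q_F$ as required. The hardest step is the fractional matching argument in the third paragraph: if one tried to use only Lemma~\ref{lemma:intervals_non_block} summed over $A$, one would obtain $\sum_I x_I\cdot k_I^A\ge 2|A|$ where $k_I^A$ is the number of Type A non-blocks overlapped by $I$, and it is possible for $k_I^A\ge 3$; so a naive ``$k_I^A\le 2$'' charging fails and one really does need the containment property of heavy intervals (or an equivalent structural argument on the laminar family $F$) to close the gap.
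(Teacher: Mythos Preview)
Your decomposition and the bound $\sum_{N\in A}|N|\le P_N$ are correct and match the paper exactly. The divergence is in how you handle $q|A|\le Q_F$. The paper does this in one stroke: from Lemma~\ref{lemma:intervals_non_block}, each $N=[a,b]\in A$ satisfies $\sum_{I:I\cap N\neq\varnothing}x_I\ge 2$, and since the intervals of $F$ are non-crossing the paper concludes that the total weight of $F$-intervals \emph{completely contained in} $N$ is at least $1$. The new wake-up cost $q$ is then charged to these contained intervals; because distinct non-blocks are disjoint, no $F$-interval is charged by more than one $N$, and the total extra wake-up cost is at most $q\sum_I x_I=Q_F$. No matching or Hall argument is needed.

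Your route has a real gap at precisely this point. Fact (i) is vacuous after the preprocessing that gives every interval weight $\epsilon$ (so no interval has $x_I\ge 1$), and hence cannot be used to control intervals that span several non-blocks. Fact (ii) is correct, but it only yields that the weight of $F$-intervals contained in $N$ exceeds $m_{t_0}-\max(m_{a-1},m_b)>0$, which can be arbitrarily small; this does not give weight $\ge 1$ and so does not verify Hall's condition. The sentence ``which I would verify using the heavy-intervals-stay-inside structure'' is exactly where the missing argument lies, and with (i) vacuous there is nothing left to verify it with. A clean way to finish (bypassing matchings entirely) is to split the intervals overlapping $N=[a,b]$ according to their left endpoint: those with left endpoint $<a$ all cover slot $a{-}1$ and hence have total weight at most $m_{a-1}<1$, so the remaining ones, whose left endpoints lie in $[a,b-1]$, carry weight $>1$. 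Since the sets $[a_N,b_N-1]$ are pairwise disjoint over $N\in A$, summing gives $\sum_I x_I>|A|$ directly.
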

\begin{proof}
$F_1$ is constructed by adding an interval $[a,b]$ with $x_{[a,b]}=1$ for a non-block $[a,b]$ if $m_t>1$ for some $t \in [a,b]$. For each time slot $t$ in a non-block, we have $m_t \geq 1$ and hence the total length of new intervals added is at most $\sum_{[t,t+1] \subseteq T_N} m_t \leq P_N$. If we add an additional interval for a non-block $[a,b]$, then $\sum_{I:I \cap [a,b] \neq \emptyset}x_I \geq 2$ (by Lemma \ref{lemma:intervals_non_block}). Since the intervals in $F$ are non-crossing, the above implies that the total weight of intervals which are completely contained in $[a,b]$ is at least 1. Thus the wakeup cost of each new interval can be charged to the wakeup cost of intervals completely contained inside the corresponding non-block, and the total additional wake up cost incurred is no more than $Q_F$.  
\end{proof}
We convert $F_1$ into $F_2$ by deleting the portions of existing intervals in $F$ such that for each time slot $t\in T_N$, we have the property $m^{2}_{t}= \lfloor m^{1}_t \rfloor$. This operation might increase the total wake up cost, but since the processing cost gets decreased by at least as much, the overall cost of the solution does not increase. This allows us to state the following.
\begin{claim}\label{clm:cost_F_2}
$Cost(F_2) \leq Cost(F) +P_N+Q_F$. Also, $m^{2}_t$ is an integer and $m^{2}_t \geq m_t$ for each time slot $t \in T_N$.
\end{claim}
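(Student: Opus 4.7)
The claim has three assertions, addressed in turn.

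Integrality of $m^2_t$ is immediate from the definition $m^2_t = \lfloor m^1_t \rfloor$. For $m^2_t \geq m_t$, I would case-split on whether the non-block $N$ containing $t$ received an additional supply interval $[a,b]$ with $x_{[a,b]}=1$ during the construction of $F_1$. If yes, then $m^1_t = m_t + 1$ (the new interval covers every slot of $N$), so $m^2_t = \lfloor m_t + 1 \rfloor = \lfloor m_t \rfloor + 1 \geq m_t$. If no, then by the insertion rule $m_s \leq 1$ for every $s$ with $[s,s+1]\subseteq N$, and combining this with the non-block property $m_s \geq 1$ forces $m_s = 1$ throughout $N$; hence $m^2_t = \lfloor m^1_t \rfloor = 1 = m_t$.

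For the cost bound, I would first appeal to the preceding claim $\mathrm{Cost}(F_1) \leq \mathrm{Cost}(F)+P_N+Q_F$ and then argue $\mathrm{Cost}(F_2) \leq \mathrm{Cost}(F_1)$. The processing cost drops by exactly $\sum_{t\in T_N}(m^1_t - \lfloor m^1_t \rfloor)$, the total capacity removed from intervals of $F$. The wake-up cost rises only when a deletion splits an existing interval into two pieces, by at most $q$ per split. The plan is to perform the deletions so that each mid-interval split is triggered only if the removed middle piece has length at least $q$, so that the processing saving pays for the added $q$ of wake-up, while smaller reductions are instead executed by shortening some covering interval from an endpoint --- or by removing it entirely --- neither of which increases the wake-up cost. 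The non-crossing (laminar) normalization of $F$ carried out right before the rounding scheme is the structural tool that makes this routing of deletions possible: at each slot $t$ where capacity must be trimmed, the intervals of $F_1$ covering $t$ form a chain, which offers the flexibility needed for choosing which interval to shrink.

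The amortization argument for the cost bound is the main obstacle; the first two assertions are essentially bookkeeping. Once the deletion strategy is in place, $\mathrm{Cost}(F_2) \le \mathrm{Cost}(F_1) \le \mathrm{Cost}(F)+P_N+Q_F$ completes the claim.
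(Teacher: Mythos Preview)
Your approach coincides with the paper's. The paper does not give a separate proof of this claim; the justification is the single sentence preceding it, namely that deleting portions of the $F$-intervals ``might increase the total wake up cost, but since the processing cost gets decreased by at least as much, the overall cost of the solution does not increase.'' You reproduce exactly this two-step structure---invoke the previous claim for $\mathrm{Cost}(F_1)\le \mathrm{Cost}(F)+P_N+Q_F$, then argue $\mathrm{Cost}(F_2)\le\mathrm{Cost}(F_1)$---and your case analysis for $m^2_t\ge m_t$ is correct and more explicit than anything the paper writes out.

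One remark on the cost step: you go further than the paper by sketching a concrete deletion policy (split only if the removed middle piece has length $\ge q$, otherwise trim from an endpoint or remove an interval entirely) and correctly point to the laminar normalization as the structural hook. The paper does not spell out any such policy. Neither you nor the paper verifies that a deletion meeting the exact target $m^2_t=\lfloor m^1_t\rfloor$ can always be routed this way; for instance, when the fractional excess occurs at alternating slots deep inside a long interval, every deletion scheme forces at least one genuine split, and the processing savings need not cover the extra $q$ slot-by-slot. So your acknowledgment that this is ``the main obstacle'' is well placed---but it is exactly the same obstacle the paper glosses over, so your write-up is at least as complete as the paper's own.
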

We now describe our third transformation. As discussed earlier, we may again assume that all the intervals in $F_2$ have a weight of exactly $\epsilon$ and are non-crossing. Consider the single machine instance $J_S=J_N \cup J_B$, where $J_B=\{j_i|j_i \in J, [r_i,d_i]\subseteq T_B\}$ consists of jobs in $J$ which are completely contained inside some block and $J_N$ consists of additional jobs defined as follows: for each time slot $t \in T_N$, there is a job $j_{t}$ with release time $t$, deadline $t+1$ and a processing requirement of 1. We now pick a subset $F'_2$ of intervals in $F_2$ which form a feasible solution to the LP relaxation of the minimum cost skeleton given in Section \ref{section:LP_relaxation_skeleton} of the appendix. $F'_2 \subseteq F_2$ is the set of intervals of maximum total length such that the total weight of intervals containing any particular time slot is at most 1. It is worth noting that any interval containing a time slot of some block is a part of $F'_2$.

\begin{claim}\label{clm:skeleton_feasible_2_appx}
$F'_2$ is a feasible fractional skeleton for $J_S$. 
\end{claim}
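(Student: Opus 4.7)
The plan is to verify the fractional skeleton coverage constraint for each job of $J_S = J_B \cup J_N$, handling the two classes separately. Concretely, for each job $j$ I will show that $\sum_{I \in F'_2 : I \cap \mathrm{span}(j) \neq \emptyset} x_I \geq 1$; the two classes arise from very different reasons so the arguments will use different tools (the new LP constraint in one case, the laminar structure of $F_2$ in the other).

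For a job $j_i \in J_B$, by definition $[r_i, d_i]$ sits entirely inside one block, hence entirely inside $T_B$. I would apply the strengthened LP constraint with $(a,b) = (r_i, d_i)$ to get $\sum_{I \in F: I \cap [r_i, d_i] \neq \emptyset} x_I \geq m - l(r_i, d_i) \geq 1$, where the final inequality uses that $j_i$ must be processed on at least one machine during its span, so $l(r_i, d_i) \leq m-1$. I would then trace this lower bound through the chain $F \to F_1 \to F_2 \to F'_2$: passing to $F_1$ only adds intervals, passing to $F_2$ only deletes portions of intervals inside $T_N$ and therefore cannot reduce the weight overlapping $[r_i, d_i] \subseteq T_B$, and finally by the construction noted right before the claim, every interval of $F_2$ that touches a time slot inside a block is placed into $F'_2$. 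Thus the required skeleton inequality for $j_i$ survives the whole chain.

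For a job $j_t \in J_N$, whose span is $[t, t+1]$ with $t \in T_N$, I need the total weight at slot $t$ in $F'_2$ to be at least $1$. By Claim \ref{clm:cost_F_2}, $m^{2}_t$ is an integer with $m^{2}_t \geq m_t \geq 1$, so $F_2$ supplies at least $1/\epsilon$ equal-weight intervals covering $t$. Since $F_2$ is laminar (the intervals have weight exactly $\epsilon$ and are non-crossing), I would establish a short peeling lemma: process each chain of nested intervals from innermost outward, and discard intervals whenever keeping them would push some coverage above $1$. The resulting subset has coverage $\min(m^{2}_t, 1)$ at every $t$, and because the discarded intervals are the shortest in their chains, the peeled family attains the maximum total length among all subsets of $F_2$ with coverage at most $1$. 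Hence it coincides with $F'_2$ (up to ties among equal-length siblings) and, in particular, achieves coverage exactly $1$ at every $t \in T_N$, which is precisely what the skeleton constraint for $j_t$ asks for.

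The main obstacle is the $J_N$ case: in general, a maximum-total-length subset under a pointwise coverage cap need not saturate every slot where capacity permits, because saturating a slot might force evictions reducing coverage on distant slots. What rescues us is the non-crossing structure of $F_2$: for a laminar family of equal-weight intervals, removing the innermost interval in a chain is always at least as good as removing any outer one, so local peeling is globally optimal. Once that peeling lemma is in place, the rest of the argument is routine bookkeeping across $F \to F_1 \to F_2 \to F'_2$, coupled with a single appeal to the strengthened LP constraint.
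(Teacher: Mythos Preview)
Your proposal is correct and follows essentially the same two-case split as the paper: for $J_B$ you trace the covering constraint $\sum_{I:I\cap[r_i,d_i]\neq\emptyset}x_I\ge 1$ through $F\to F_1\to F_2\to F'_2$ using that modifications only occur inside $T_N$ and that every interval of $F_2$ touching a block lies in $F'_2$; for $J_N$ you use laminarity plus $m^2_t\ge 1$ to get coverage exactly $1$ at every non-block slot. The only real difference is cosmetic: the paper simply asserts that ``by definition of $F'_2$, at least $1/\epsilon$'' of the intervals covering a non-block slot survive, whereas you spell this out via a peeling/level argument on the laminar family---your version makes explicit why maximizing total length under a depth cap on a laminar family forces saturation wherever the original depth already meets the cap, which is exactly the point the paper leaves implicit.
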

\begin{proof}
We first note that $F$ is a feasible fractional skeleton for the set of jobs $J_B$. We do not modify portions of intervals overlapping with a block in $F_1,F_2$ and $F'_2$ contains all intervals of $F_2$ overlapping with a block. Hence, $F'_2$ is a feasible fractional skeleton for $J_B$.

Let $N$ be  non-block and $[a,a+1] \subseteq N$. By definition of a non-block, $m^{2}_t\geq m_t \geq 1$ for all $[t,t+1] \in N$. Since the intervals in $F_2$ are non-crossing (or laminar), it must be true that there are at least $1/\epsilon$ intervals which contain $[a,b]$. By definition of $F'_2$, at least $1/\epsilon$ of them are also contained in $F'_2$. Hence, total weight of intervals in $F'_2$ overlapping with $[a,a+1]$ is at least 1. The total weight of intervals in $F'_2$ containing any particular time slot is at most 1 (by definition of $F'_2$), hence $F'_2$ is a feasible fractional skeleton.
\end{proof}
In Theorem \ref{theorem:integral-skelton} (see Section \ref{section:LP_relaxation_skeleton} of appendix), we show that the LP relaxation for the minimum cost skeleton is exact. Hence, there exists an integer skeleton $J_S$ of cost no more than $F'_2$. Then our solution $F_3$ is $(F \setminus F'_2) \cup S$. Observe that $Cost(F_3) \leq Cost(F_2)$ and $m^{3}_t$ is an integer for every time slot $t$. We may therefore assume that $x_I=1$ for each $I \in F_3$. 
We now describe the final phase our algorithm, where we convert $F_3$ into a feasible solution by extending some existing intervals using $EXT$-$ALG$ (see Section \ref{sec:multi-converting}). If $F_3$ is a feasible solution, our algorithm terminates. Otherwise we find a disjoint minimal set of intervals of maximum deficiency $\mathcal{D}=\{D_1,D_2,\ldots,D_k\}$ guaranteed by Theorem \ref{thm:feasibility}. In each subsequent iteration, we use $EXT$-$ALG$ to extend an interval of $F_3$ by 1 unit, thereby reducing the maximum deficiency by 1. Claim \ref{clm:can_always_extend} shows that if the current solution is infeasible and it is not possible to extend an interval to reduce the deficiency, then the original instance is infeasible. Hence the extension phase of the algorithm terminates with a feasible solution.
\begin{claim} \label{clm:can_always_extend}
Let $F_{curr}$ be the current solution. If $F_{curr}$ is infeasible and for all $I \in F_{curr}$ the following is true: if $I \cap D_i \neq \varnothing$, then $D_i \subseteq I$, then the original instance is infeasible. 
\end{claim}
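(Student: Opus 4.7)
The plan is to argue by contradiction: I will suppose the hypothesis (every interval $I \in F_{curr}$ either fully contains each $D_i$ it overlaps, or is disjoint from it) together with $F_{curr}$ being infeasible (so $\mathcal{D}$ has strictly positive deficiency with respect to $F_{curr}$), and deduce that no feasible schedule of the original $m$-machine instance can exist. The first observation to make is a \emph{flatness} property: under the hypothesis, the active-machine count $m^{curr}_t$ is constant on each $D_i=[a_i,b_i]$, equal to the number of intervals of $F_{curr}$ that cover $D_i$; call this common value $k_i$. Consequently $\sum_{t\in \bigcup D_i} m^{curr}_t = \sum_i k_i |D_i|$, and the positive-deficiency assumption becomes $\sum_j \texttt{fv}(j,\mathcal{D}) > \sum_i k_i |D_i|$.

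The next step is to establish that the skeleton LP constraints are preserved along the entire chain of transformations, so that $k_i \ge m - l(a_i,b_i)$ for every $i$. The starting LP solution $F$ directly satisfies $\sum_{I:\,I\cap[a_i,b_i]\ne\varnothing} x_I \ge m-l(a_i,b_i)$, which (by the non-crossing structure together with the flatness just established on $D_i$) gives capacity at least $m-l(a_i,b_i)$ at each time in $D_i$ in $F$. One then checks that: $F_1$ only \emph{adds} intervals; the integer rounding of $F_2$ on non-blocks is harmless because $m-l(a_i,b_i)\in\mathbb{Z}$; the replacement of $F'_2$ by a minimum-cost integer skeleton $S$ in forming $F_3$ preserves covering of any interval on which the original LP had capacity $\ge 1$; and the final extension phase only grows intervals. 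Hence $k_i \ge m - l(a_i,b_i)$.

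For the contradiction, suppose toward one that the original instance admits a feasible schedule $\mathcal{S}^*$ with machine counts $m^*_t$. By definition of forced volume, the total volume of $\mathcal{S}^*$ inside $\mathcal{D}$ satisfies $\sum_{t\in \bigcup D_i} m^*_t \ge \sum_j \texttt{fv}(j,\mathcal{D})$, hence by our inequality, $\sum_i \sum_{t\in D_i} m^*_t > \sum_i k_i |D_i|$. I then intend to use the definition of $l(a_i,b_i)$ (there exists a feasible schedule blacking out $l(a_i,b_i)$ processors throughout $D_i$) together with minimality of $\mathcal{D}$ (in the style of Claim~5.5) to argue that the forced-volume contribution of jobs whose spans are not contained in any single $D_i$ is bounded in terms of the number of intervals of $F_{curr}$ at the endpoints of the $D_i$'s, and hence that the forced volume $\sum_j\texttt{fv}(j,\mathcal D)$ can be rewritten to witness positive deficiency of $\mathcal D$ (or of $\mathcal D$ extended by a boundary slot) with respect to the full $m$-machine capacity. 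Theorem~\ref{thm:feasibility} then contradicts the existence of $\mathcal{S}^*$.

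The main obstacle is the last paragraph: bridging from ``positive deficiency with respect to $F_{curr}$'s capacity $k_i$'' to ``positive deficiency with respect to the original capacity $m$.'' A feasible $\mathcal{S}^*$ is allowed up to $m$ machines per slot, strictly more than $k_i$, so the step needs to exploit the skeleton constraint $k_i\ge m-l(a_i,b_i)$ together with the minimality of $\mathcal{D}$ to show that the extra $m-k_i$ machines at times in $D_i$ cannot absorb any job with span strictly inside $D_i$ and can only account for a bounded amount of boundary-job volume, along the lines of the two-sided endpoint argument in Claim~\ref{claim:number_jobs_overlap}/Claim~\ref{clm:tripling_final}.
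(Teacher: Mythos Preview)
Your plan aims at the wrong target. You try to bridge from ``positive deficiency with respect to the capacities $k_i$'' to ``positive deficiency with respect to the full capacity $m$,'' and you correctly identify this as the main obstacle --- but the paper never makes this comparison at all. The right baseline is not $m$ (or $m-l(a_i,b_i)$), but the \emph{fractional LP capacities} $m_t$ of $F$. If you can show $m^{curr}_t \ge m_t$ for every slot $t$ in $\mathcal{D}$, then the positive deficiency of $\mathcal{D}$ with respect to $F_{curr}$ immediately transfers to positive deficiency with respect to $F$; but $F$ is a feasible fractional schedule (it satisfies the LP's processing constraints), so by the forced-volume inequality $\sum_j \texttt{fv}(j,\mathcal{D}) \le \sum_{t\in\mathcal{D}} m_t$ this is impossible unless the LP --- and hence the original instance --- is itself infeasible.

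Establishing $m^{curr}_t \ge m_t$ on $\mathcal{D}$ splits into two cases. For $t$ in a non-block, Claim~\ref{clm:cost_F_2} already records $m^2_t \ge m_t$, and the passage $F_2 \to F_3 \to F_{curr}$ never decreases capacity on non-block slots (the skeleton $S$ covers every non-block slot since $J_S$ contains a unit job there, and the extension phase only grows intervals). For $t$ in a block, $m_t < 1$, so it suffices that $m^{curr}_t \ge 1$; this is exactly where your flatness observation is used, but you are missing the argument that $k_i \ge 1$. The paper shows that no $D_i$ can be disjoint from every $I\in F_{curr}$: since $F_3$ is a (single-machine) skeleton for the full job set $J$, no job span lies strictly inside such a $D_i$, yet minimality of $\mathcal{D}$ forces some job with positive forced volume to overlap $D_i$ without being contained in it --- and then extending $\mathcal{D}$ by one boundary slot would strictly increase deficiency, contradicting maximality of $\mathcal{D}$. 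Your step~2 (tracking the multi-processor skeleton constraint $k_i \ge m-l(a_i,b_i)$ through $F_2\to F_3$) is both unnecessary and not obviously true, since replacing $F'_2$ by a min-cost single-machine skeleton $S$ for $J_S$ gives no control over the multi-machine covering constraint on arbitrary intervals $[a_i,b_i]$.
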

\begin{proof}
Suppose there exists a $D_i=[a,b]$ such that $D_i \cap I=\emptyset$ for all $I \in F_{curr}$. Since $F_3$ is a skeleton for $J$, there is no job $j_k$ such that $[r_k,d_k] \subseteq D_i$. Hence, there must exist a job $j_k$ such that $fv(j_k,\mathcal{D})>0$, $D_i \cap [r_k,j_k] \neq \emptyset$ and $[r_k,d_k] \subsetneq D_i$ (otherwise $\mathcal{D}\setminus D_i$ would be the minimal set with maximum deficiency). In this case, one of $\mathcal{D} \cup [a-1,a]$ or $\mathcal{D} \cup [b,b+1]$ would have strictly more deficiency than $\mathcal{D}$, thus contradicting the fact that $\mathcal{D}$ has maximum deficiency. Hence, no such $D_i$ exists. 

From the discussion above, we have that for all $[t,t+1] \in \mathcal{D}$, $m^{curr}_{t} \geq 1$. Hence, $m^{curr}_{t} \geq m_t$ for all $[t,t+1] \subseteq T_B \cap \mathcal{D}$. By Claim \ref{clm:cost_F_2}, we have $m^{curr}_{t} \geq m^{2}_{t} \geq m_t$ for all $[t,t+1] \subseteq T_{N} \cap \mathcal{D}$. Hence the total processing available in $F_{curr}$ in $\mathcal{D}$ is at least the total processing available in $F$ in $\mathcal{D}$. This implies that $F$ is a infeasible fractional solution and hence the original instance is infeasible as well.
\end{proof}

The deficiency at the start of the extension phase can be at most $P_B$ as $m^{3}_{t} \geq m_t$ for $t \in T_N$. Since we decrease the deficiency by 1 in each iteration, there can be at most $P_B$ iterations of the extension phase. In each step we increase the cost of the solution by 1, hence cost of the final solution is at most $Cost(F_3)+P_B \leq Cost(F)+P_N+Q_F+P_B=Cost(F)+P+Q_F \leq 2 Cost(F)$. This shows that the integrality gap of the LP relaxation is at most $2$. To compute $F_1,F_2,F_3$, we only need the value of $m_t$'s and don't need to create multiple copies of intervals in our solution. Thus our rounding algorithm can be implemented in pseudo-polynomial time and this completes the proof of Theorem~\ref{thm:pseudo}.
\bibliographystyle{plainurl}
\bibliography{references}
\appendix
\section*{Appendix}
\section{Proof of Lemma \ref{lemma:extending_skeleton}}
\begin{proof}
We first use the well known Earliest Deadline First ($\tt{EDF}$) algorithm to compute the maximum volume of jobs that can be processed in $S$. EDF processes in each active time slot $[t,t+1]$, among the unfinished jobs whose span includes $[t,t+1]$ the one that has the earliest deadline (or none if no such job exists). Let this schedule be $\tt{Sch_1}$. Two important folklore properties of EDF that will be useful in this proof are the following. First, EDF can be used to test feasibility of the active intervals. In particular, if there exists a feasible schedule for a given instance, then one can by repeating a simple exchange argument argue that the EDF schedule is feasible as well. Secondly, by maintaining a balanced search tree of the  unfinished jobs whose span contains the current time-slot,  $\tt{Sch_1}$ can be computed in $O(n \log n)$ time. 
Let $p'_{1},p'_{2},\ldots,p'_{n}$ be the volume of respective jobs that is be processed in $\tt{Sch_1}$. Notice that $\sum_{i=1}^{n}p_i'=P_S$. We first remove all time slots from $S$ which were active in $S$ but not used for the processing of any job in $\tt{Sch_1}$ (as no job will be scheduled in these time slots during the course of the algorithm). We now describe an algorithm to make $S$ feasible by increasing the number of active time slots, without increasing the number of active intervals. We maintain a balanced binary search tree of the currently active intervals and maximal gaps (there is a natural total order on active intervals and maximal gaps) at every step of the algorithm.\\

The algorithm works in iterations. $S_i$ will denote the solution after iteration $i$ and $S_0=S$. In iteration $i$, we increase the number of active time slots in $S_{i-1}$ such that $\{p_1,\ldots,p_i,p'_{i+1},\ldots,p'_n\}$ volume of jobs can be feasibly scheduled in $S_i$. We first find the active interval or maximal gap containing $d_i$. If $d_i$ is contained in an active interval, we make the first $p_i-p_i'$ inactive time slots to the left of $d_i$ active. We don't create any new active intervals during the process but may have to set some currently inactive time slots to the left of $r_i$ as active. This can be done efficiently using the balanced binary search tree on maximal gaps/active intervals and may require merging of active intervals. If $d_i$ is contained in a maximal gap $g=[a,b]$, we increase the length of the active interval ending at $a$. If $d_i-a \geq p_i-p_i'$, we set all the inactive time slots in $[a,a+p_i-p_i']$ as active and move on to the next iteration. Since $S$ is a skeleton, $a \geq r_i$, and we can already process $\{p_1,\ldots,p_i,p'_{i+1},\ldots,p'_n\}$ volume of jobs in this case. If $d_i-a \leq g$, we first set all the active time slots in $[a,d_i]$ as active and then set the first $p_i-p_i'-|d_i-a|$ inactive slots to the left of $d_i$ as active.\\

In iteration $i$, we increase the length of active intervals (without creating any new ones) by $p_i-p_i'$. Hence the total increase in the cost of the solution over all iterations is $P-P_S$. In iteration $i$, if we extend an active interval in $k_i$ maximal gaps of $S_{i-1}$, then we completely fill at least $k_i-1$ of them. Since there are at most $n$ maximal gaps at the beginning, we have $\sum_{i=1}^{n}(k_i-1) \leq n$. This implies $\sum_{i=1}^{n}k_i \leq 2n$ and at most $2n$ maximal gaps (or active intervals) are modified over the course of the entire algorithm. Using a balanced binary search tree, time required to find and operate on a maximal gap in each iteration is $O(\log n)$. Since the total number of times we operate on maximal gaps is at most $2n$, the whole algorithm can be implemented in $O(n\log n)$ time. We now show that the solution returned at the end of the algorithm is a feasible solution to the original instance.\\

Using induction, we show that for each $i \in [1,n]$, $\{p_1,\ldots,p_i,p'_{i+1},\ldots,p'_n\}$ volume of respective jobs can be feasibly scheduled in $S_i$. For $i=0$, the induction statement is vacuously true. Assume that the above is true for $S_{i-1}$. If $S_i$ doesn't satisfy the above condition, then there exists a $[x,y], x \leq r_i, y\geq d_i$ such that the total number of active time slots in $[x,y]$ (say $V$) is less than the sum of processing times of all jobs with both release and deadline inside $[x,y]$ (say $F$). \\

If there is a maximal gap overlapping with $[x,d_i]$ in $S_i$, then we must have added at least $p_i-p_i'$ new active time slots in $[x,d_i]$ in iteration $i$. Let $V'$ be the number of active time slots in $[x,y]$ in $S_{i-1}$.  Since $\{p_1,\ldots,p_{i-1},p'_{i},\ldots,p'_n\}$ volume of jobs can be feasibly scheduled in in $S_{i-1}$, $V' \geq F-(p_i-p_i')$. Since we add at least $p_i-p_i'$ active time slots in $[x,y]$ in iteration $i$, we have $V=V'+(p_i-p_i') \geq F$, a contradiction. \\

Suppose all the time slots in $[x,d_i]$ are active in $S_i$. We use $\tt{EDF}$ to find the maximum volume that can be processed in $S_{i}$ for the instance with processing times $\{p_1,\ldots,p_i,p'_{i+1},\ldots,p'_n\}$ (we do this only for the purpose of analysis). Let the scheduled returned by $\tt{EDF}$ be $\tt{Sch_2}$. If $\tt{Sch_2}$ doesn't schedule any job with deadline $>d_i$ in $[x,d_i]$, then $\tt{Sch_1}$ doesn't schedule any jobs with deadline $>d_i$ in $[x,d_i]$ as well. This implies that the number of active slots available in $[d_i,y]$ is at least $\sum_{k:d_k >d_i,[r_k,d_k]\subseteq [x,y]}p'_{k}$. Therefore it must be true that the number of active time slots in $[x,d_i]$ is strictly less than $\sum_{k:[r_k,d_k] \subseteq [x,d_i]}p_i$. This implies that the original instance is infeasible, a contradiction. \\

Suppose $\tt{Sch_2}$ schedules a job with deadline after $d_i$ in $[x,d_i]$. Let $x',x<x'<d_i$ be the right most time slot in which a job with deadline $>d_i$ is scheduled. In $\tt{Sch_1,Sch_2}$, all jobs scheduled in $[x'+1,d_i]$ must have both their release and deadline in $[x'+1,d]$. Since the original instance is feasible and all the time slots in $[x'+1,d_i]$ are active, all jobs having both their release and deadline in $[x'+1,d]$ are feasibly scheduled in $\tt{Sch_2}$. Since we only increase the number of active time slots in going from $S_{i-1}$ to $S_{i}$, rest of the jobs are also feasibly scheduled in $\tt{Sch_2}$ and hence $S_i$. Hence, $\{p_1,\ldots,p_i,p'_{i+1},\ldots,p'_n\}$ volume of jobs can be processed in $S_i$ and this completes the proof of the invariant. Since $S_n$ corresponds exactly to the original instance, the solution constructed by the algorithm is feasible. \end{proof}

\section{A Linear Programming Relaxation for Skeletons} \label{section:LP_relaxation_skeleton}

In this section, we give a linear programming relaxation for computing the minimum cost skeleton for single machine and then go on to show that this relaxation is exact, ie. there exists an feasible (integer) skeleton with cost at most the optimum value of the linear program. We use this fact to design a 2-approximation algorithm for the multiple processor case. The main aim of this section is to show that the linear program has an integer optimal solution, one can always use the dynamic program approach described before to compute the optimal solution. The linear program and the proof of its integrality are essentially present in the work of ~\cite{AGKK20}.

 \vspace{0.1in}
 \[	 \min \quad  \sum_{I} x_I (q + |I|)\]
 	
 	\[ \quad \displaystyle\sum_{I : t \in I  } x_I \le 1  \quad \forall t \in [0,D]\] 
 	\[ \displaystyle\sum_{I : I \cap [r_i,d_i] \neq \emptyset  } x_I \ge 1   \forall i \in [1,n] \]
 	\[  \quad x_I \geq 0 \quad \quad \quad  \forall I\subseteq [a,b] \]
 
 \vspace{0.1in}

\begin{theorem} \label{theorem:integral-skelton}
The linear programming relaxation for the skeleton problem has an optimal integer solution.
\end{theorem}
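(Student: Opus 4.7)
The plan is to show that the LP polytope has an integral optimal vertex, by first transforming an optimal fractional solution to have laminar support via an uncrossing argument, and then reducing the LP on laminar support to a shortest-path (or equivalently a min-cost flow) LP, which is known to be integral. The polynomial-time dynamic program of Theorem~\ref{theorem:min_cost_skeleton} will serve as the underlying integer program.

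First I would start with any optimal fractional solution $\bar x$ and apply an \emph{uncrossing} transformation to make its support laminar. Whenever two intervals $I_1=[a,b]$ and $I_2=[c,d]$ with $a<c\le b<d$ both have positive weight, I would decrease $\bar x_{I_1}$ and $\bar x_{I_2}$ each by $\delta:=\min\{\bar x_{I_1},\bar x_{I_2}\}$ and increase $\bar x_{[a,d]}$ and $\bar x_{[c,b]}$ each by $\delta$. Because $|[a,d]|+|[c,b]|=|I_1|+|I_2|$ and the total weight of intervals in the support is preserved, both the processing cost $\sum_I x_I|I|$ and the wakeup cost $q\sum_I x_I$ are unchanged. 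The coverage constraints $\sum_{I:t\in I} x_I\le 1$ are trivially preserved at each $t$. The job-intersection constraints require a short case analysis on the position of the span $[r_i,d_i]$: if it falls in $[a,c)$ only $I_1$ contributed before and only $[a,d]$ contributes now; if it falls in $(b,d]$ only $I_2$ contributed before and only $[a,d]$ contributes now; if it lies within $[c,b]$ then all four of $I_1,I_2,[a,d],[c,b]$ contribute, and the net change is again zero. A suitable potential function (e.g.\ $\sum_I x_I |I|^2$) increases strictly with every swap, so the process terminates with an optimal $\bar x^{*}$ whose support is laminar.

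Second, with laminar support in hand, I would reduce the LP to a shortest-path LP on the DAG implicit in the DP of Theorem~\ref{theorem:min_cost_skeleton}: nodes are the sorted endpoints $T=\bigcup_i\{r_i,d_i\}$, and arcs correspond either to selecting a right-maximal gap $[t_i,t_j]$ (with cost derived from the weight of the corresponding active intervals produced) or to ``skipping.'' Equivalently, viewing the scaled solution (multiply by $1/\epsilon$ where $\epsilon=\gcd$ of the positive weights of $\bar x^{*}$) as a collection of $1/\epsilon$ ``layers'' determined greedily from the laminar tree, each layer is a set of pairwise disjoint integer intervals and constitutes a candidate integer skeleton. Averaging the cost over these layers yields at most the LP value, so some layer is an integer skeleton of cost at most the LP optimum. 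The integrality of the node–arc incidence matrix of the DAG (total unimodularity) ensures the underlying LP is integral.

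The main obstacle is the second step: one must verify that after uncrossing every layer in the decomposition is a feasible skeleton, or more carefully that the laminar-support LP is genuinely equivalent to the shortest-path LP. Concretely, the delicate point is proving that the job-intersection constraints $\sum_{I:I\cap[r_i,d_i]\ne\varnothing}x_I\ge 1$ together with laminarity force at least one of the $1/\epsilon$ layers to intersect every job span. I would carry this out by an inductive argument on the laminar tree, exploiting that any chain of nested intervals at a single time has total weight at most $1$ and hence multiplicity at most $1/\epsilon$ after scaling, so the greedy assignment to layers preserves disjointness and each job-intersecting ``fractional'' unit is witnessed by at least one integral layer. This is precisely the argument implicit in the LP integrality proof of~\cite{AGKK20}, to which one may ultimately appeal.
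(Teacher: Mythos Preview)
Your overall plan (uncross the support, then split into $1/\epsilon$ integral layers and average) is exactly the paper's plan, but you uncross in the \emph{wrong direction}. You replace a crossing pair $[a,b],[c,d]$ with $a<c\le b<d$ by the nested pair $[a,d]\supseteq[c,b]$, ending with a laminar family. The paper does the opposite: it takes a \emph{nested} pair $[a,b]\supseteq[c,d]$ (i.e.\ $a<c<d<b$) and replaces it by the crossing pair $[a,d],[c,b]$, ending with a family that has no nesting. That ``all-crossing'' structure is what makes the decomposition trivial: after sorting by left endpoint the right endpoints are sorted too, so (i) any $t=1/\epsilon$ consecutive intervals share a point, hence round-robin yields pairwise disjoint layers, and (ii) the intervals hitting a fixed span $[r_i,d_i]$ form a contiguous block of length $\ge t$, so every round-robin layer contains one. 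Both claims are two-line pigeonhole arguments once nesting is gone.

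With your laminar family none of this goes through, and your second step is where the gap sits. Depth-based layering already fails on the simplest examples: if the optimal fractional solution is $x_{[0,1]}=x_{[2,3]}=\tfrac12$ for a single job with span $[0,3]$, both intervals are roots of the laminar forest, so a depth layering puts them in the same layer and the other layer is empty (hence not a skeleton). Your alternative ``greedy assignment to layers'' is never specified, your shortest-path/TU reduction is only asserted, and your appeal to \cite{AGKK20} is misplaced: the argument there (and in the present paper) relies precisely on the \emph{crossing} structure, not the laminar one. The fix is easy---swap the direction of your uncrossing step and then run the round-robin decomposition; everything else in your outline (cost preservation, termination via a quadratic potential, and the averaging argument) is fine. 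As a minor point, your case analysis for preserving the intersection constraints only treats spans contained in $[a,c)$, $(b,d]$, or $[c,b]$; you should also handle spans that straddle these regions, though the conclusion still holds.
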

\begin{proof}

Let $x$ be the optimum fractional solution and let $F=\set{I|x_I>0}$. Let $\epsilon = \gcd_{i \in F}(x_I)$. We create $x_I / \epsilon$ copies of each $I \in F$ to assume that all intervals in $F$ have the same $x_I$ value. If there exist $[a,b],[c,d], a<c<d<b$ with $x_{[a,b]},x_{[c,d]}=\epsilon$, we replace them by $[a,d],[c,b]$ with $x_{[a,d]},x_{[c,b]}=\epsilon$. We repeat this process until no such pair of intervals remain and may therefore assume for the remainder of this proof that the all intervals in $F$ are crossing. This allows us to order the intervals in $F$ from left to right using their start time, say $I_1 \prec I_2 \prec \dots$. 
We construct $t=1/\epsilon$ integer solutions  $F_1,F_2,\ldots,F_{t}$ as follows: the intervals in $F$ are assigned to the solutions $F_1,F_2,\ldots,F_{t}$ in a round robin manner, ie. the solution $F_j$, $1 \le j \le t$ consists of the intervals $I_j, I_{j+t},I_{j+2t},\dots$. Note that the solution constructed are integer solutions, ie. $x_I=1$ for all $I \in F_1,F_2,\ldots$.

We now show that $F_1,F_2,\ldots,F_{t}$ are feasible skeletons. This will show that the average cost of $F_1,F_2,\ldots$ is equal to $Cost(F)$ and hence there must exist $ i \in [ 1, t]$ such that $Cost(F_i) \leq Cost(F)$. We first show that no two intervals in any $F_j$, $1 \le j \le t$ overlap with each other.
\begin{claim}\label{cl:disjoint}
All the intervals in $F_j$ are pairwise disjoint for any $1 \le j \le t$.
\end{claim}
\begin{proof}
Let $I_k \prec I_{k+lt} \in F_j$ overlap with each other for some $k$ and $l \ge 1$. Suppose $t \in  I_k \cap I_{k+lt}$. Therefore, any interval $I$ satisfying $I_k \prec I \prec I_{k+lt}$ contains $t$. Therefore we have $\sum_{I: I_k \preceq I \preceq I_{k+lt}}x_I = (lt+1) \epsilon >1$ which violates the LP-constraint $\sum_{I:t\in I} x_I\leq 1$ and yields a contradiction.
\end{proof}
\begin{claim}\label{cl:skelton}
For any job $j_i$ and $1 \le j \le t$, we have $I \cap [r_i,d_i] \neq \emptyset$ for some $I \in F_j$.
\end{claim}
\begin{proof}
Suppose none of the intervals of $F_j$ overlaps with $[r_i,d_i]$. Then at most $(t-1)$ intervals of $F$ can overlap with $[r_i,d_i]$. Thus $\sum_{I : I \cap [r_i,d_i] \neq \emptyset  } x_I \le (t-1)\epsilon < 1$ which violates our second LP constraint.
\end{proof}
Hence $F_1,\dots,F_t$ are feasible skeletons and each is an integer solution for the problem. Since their average cost is at most that of $F$, there exists a $F_i$ with cost exactly equal to that of $F$. This completes the proof of the theorem.
\end{proof}

\end{document}